\definecolor{color1}{RGB}{128, 0, 0}
\definecolor{color2}{RGB}{0, 0, 128}
\newtheorem{prop}{Proposition}
\theoremstyle{remark}
\begin{document}

\title{\fontsize{23.5pt}{12pt}\selectfont Federated Learning via Intelligent Reflecting Surface}

\author{\IEEEauthorblockN{Zhibin Wang, \IEEEmembership{Student Member, IEEE}, Jiahang Qiu, Yong Zhou, \IEEEmembership{Member, IEEE}, Yuanming Shi, \IEEEmembership{Member, IEEE}, \\
		 Liqun Fu, \IEEEmembership{Senior Member, IEEE}, Wei Chen, \IEEEmembership{Senior Member, IEEE}, and Khaled B. Letaief, \IEEEmembership{Fellow, IEEE}}
	\thanks{
		Z. Wang, Y. Zhou, and Y. Shi are with the School of Information Science and Technology, ShanghaiTech University, Shanghai, 201210, China (E-mail: \{wangzhb, zhouyong, shiym\}@shanghaitech.edu.cn).}
	\thanks{
		J. Qiu and L. Fu are with the School of Informatics, and Key Laboratory of Underwater Acoustic Communication and Marine Information Technology Ministry of Education, Xiamen University, Xiamen 361005, China (E-mail: jiahang@stu.xmu.edu.cn, liqun@xmu.edu.cn)}
	\thanks{
		W. Chen is with the Department of Electronic Engineering, Tsinghua University, Beijing 100084, China (E-mail: wchen@tsinghua.edu.cn).}
	\thanks{
		K. B. Letaief is with the Department of Electronic and Computer Engineering, Hong Kong University of Science and Technology, Hong Kong (E-mail: eekhaled@ust.hk). He is also with Peng Cheng Laboratory, Shenzhen, China.}
}

\maketitle

\begin{abstract}
	
Over-the-air computation (AirComp) based federated learning (FL) is capable of achieving fast model aggregation by exploiting the waveform superposition property of multiple access channels.
However, the model aggregation performance is severely limited by the unfavorable wireless propagation channels.
In this paper, we propose to leverage intelligent reflecting surface (IRS) to achieve fast yet reliable model aggregation for AirComp-based FL.
To optimize the learning performance, we formulate an optimization problem that jointly optimizes the device selection, the aggregation beamformer at the base station (BS), and the phase shifts at the IRS to maximize the number of devices participating in the model aggregation of each communication round under certain mean-squared-error (MSE) requirements.
To tackle the formulated highly-intractable problem, we propose a two-step optimization framework.
Specifically, we induce the sparsity of device selection in the first step, followed by solving a series of MSE minimization problems to find the maximum feasible device set in the second step.
We then propose an alternating optimization framework, supported by the difference-of-convex-functions programming algorithm for low-rank optimization, to efficiently design the aggregation beamformers at the BS and phase shifts at the IRS.
Simulation results will demonstrate that our proposed algorithm and the deployment of an IRS can achieve a lower training loss and higher FL prediction accuracy than the baseline algorithms.

\end{abstract}

\begin{IEEEkeywords}
	
Federated learning, intelligent reflecting surface, over-the-air computation, sparse optimization.

\end{IEEEkeywords}
	
\section{Introduction}

Recent years have witnessed a bloom of artificial intelligence (AI) applications, such as chess play \cite{silver2018general}, natural language generation \cite{gatt2018survey}, and image classification \cite{krizhevsky2012imagenet}.
By adopting advanced machine learning techniques, particularly reinforcement learning and deep learning, computers are able to mimic human behaviours by exploiting tremendous computing power and large amounts of data.
With the further rise of edge computing and Internet of Things (IoT), there emerges a new AI paradigm, named \textit{edge AI} \cite{mao2017mobileedgecomputing, EdgeAI, lataief2019roadmap6g, shi2020edge}, which pushes the AI frontier from the cloud center to the network edge. 
As the data collection and processing are mostly performed at the network edge, the service latency and energy consumption of edge devices can be significantly reduced by edge AI.
As a promising framework for edge AI, \textit{federated learning} (FL) \cite{FedAvg, FLConcept} has recently been proposed to coordinate multiple edge devices to collaboratively train a global AI model. 
Specifically, FL iteratively performs the following two processes \cite{FedAvg}: 1) \textit{model aggregation}: the edge server receives the local model updates from the edge devices over multiple-access channels, and then updates the global model by averaging over the received local model updates; and 2) \textit{model dissemination}: the edge server broadcasts its updated global model to the edge devices, each of which updates the local model based on its own local dataset. 
As only model parameters rather than the real raw data are transmitted to the edge server in the model aggregation process, FL is capable of achieving privacy protection.

As the edge devices are usually connected to the edge server over wireless channels, the model parameters received by the edge server are inevitably distorted by channel fading and additive noise. 
To tackle this issue, several \textit{digital FL} schemes have been proposed to achieve reliable and accurate model aggregation \cite{elgabli2020harnessing, chen2020joint, ren2020scheduling, yuan2020scheduling}.
Specifically, each edge device is allocated an orthogonal resource block to upload its local model parameters, while the edge server is assumed to correctly decode all the local models by adopting the adaptive modulation and coding scheme \cite{elgabli2020harnessing}.
The authors in \cite{chen2020joint} minimized the training loss by jointly optimizing the resource allocation and device selection, taking into account the delay and energy consumption requirements.
Besides, the device scheduling policies for model uploading in each communication round were proposed in \cite{ren2020scheduling} and \cite{yuan2020scheduling} to speed up the convergence rate of FL.
However, the aforementioned studies adopted orthogonal multiple access (OMA) based resource allocation schemes, such as time division multiple access (TDMA) and orthogonal frequency division multiple access (OFDMA), where the required radio resources are linearly scaling with the number of edge devices that participate in FL. When the number of edge devices is large, a substantial communication latency is introduced in the model aggregation process and in turn becomes the performance-limiting factor of FL.

To address the above challenges, \textit{over-the-air computation} (AirComp) empowered \textit{analog FL} emerged to enhance the learning performance under the limited communication bandwidth and stringent latency requirements.
AirComp merges the concurrent data transmission from multiple devices and the function computation via exploiting the waveform superposition property of multiple-access channels \cite{nazer2007computation, chen2018uniform, zhu2019mimo}.
Meanwhile, as the edge server in FL is merely interested in the aggregated model rather than the individual local models, AirComp, as a non-orthogonal multiple access (NOMA) scheme, is recognized as a promising solution for achieving spectral-efficient and low-latency FL \cite{yang2020federated, zhu2020broadband, amiri2020machine, sery2020analog, zhang2020gradient}.
Specifically, the authors in \cite{yang2020federated} proposed a fast model aggregation approach by jointly optimizing device selection and receive beamforming to improve the statistical learning performance under certain mean-squared-error (MSE) requirements for on-device distributed FL.
The authors in \cite{zhu2020broadband} developed a broadband analog aggregation scheme for low-latency FL by considering the communication-and-learning trade-off.
The results in \cite{amiri2020machine} demonstrated that the analog approach via AirComp converges faster than the digital approach due to its more efficient use of limited radio bandwidth.
In \cite{sery2020analog}, the authors developed a gradient-based algorithm to directly deal with noise distorted gradients for FL over wireless channels.
In addition, the authors in \cite{zhang2020gradient} studied the optimal power control problem for AirComp-based FL with gradient statistics.
To achieve an average behavior of local model updates during model aggregation, magnitude alignment should be achieved at the edge server to reduce the aggregation error of AirComp \cite{gao2020optimized, zhu2020over, tang2020reliable}.
However, unfavorable propagation environment inevitably leads to magnitude reduction and misalignment \cite{tang2020reliable}, which in turn degrade the model aggregation accuracy of AirComp-based FL.

To overcome the detrimental effect of channel fading in wireless networks, \textit{intelligent reflecting surface} (IRS) is a cost-effective technology for improving the spectral and energy efficiency via reconfiguring the wireless propagation environment \cite{wu2020towards, wu2019intelligent, ReconfigurableIRS, chen2019intelligent_secure, AirIRS, wang2020wirelesspowered, yang2020flris}.
In particular, a large number of low-cost passive reflecting elements contained in an IRS are capable of adjusting the phase shift of the incident signal, and thus altering the propagation of the reflected signal. 
The signal reflected by IRS can be constructively superposed with the signal over the direct link to boost the received signal power \cite{wu2020towards}.
Due to the passive nature, the power consumption of the IRS is negligible compared with that of the traditional full-duplex amplify-and-forward relay.
In \cite{wu2019intelligent}, an IRS was deployed to minimize the transmit power of the multi-antenna access point (AP) by jointly optimizing active and passive beamforming, while satisfying the signal-to-interference-plus-noise ratio (SINR) constraints.
A joint design of the downlink transmit power and the phase shifts of IRS was developed in \cite{ReconfigurableIRS} to maximize the energy efficiency.
The authors in \cite{chen2019intelligent_secure} utilized the IRS to enhance the physical layer security by jointly optimizing the beamformers at the base station (BS) and the reflecting coefficients at the IRS.
The authors in \cite{AirIRS} leveraged the IRS to minimize the distortion of AirComp in wireless networks.
Moreover, the MSE of aggregated data can be significantly reduced by deploying an IRS in a wireless-powered AirComp network \cite{wang2020wirelesspowered}.
The aforementioned studies demonstrated the potential gains of deploying an IRS in harsh wireless environment, which motivates us to leverage IRS to compensate for magnitude reduction and misalignment of AirComp in FL systems, thereby achieving a lower training loss and higher test accuracy in fewer communication rounds.

\subsection{Contributions}

In this paper, we exploit the advantages of IRS to design a communication-efficient model aggregation scheme for AirComp-based FL systems.
Developing such a scheme to facilitate fast yet reliable model aggregation is challenging.
On one hand, selecting more devices to participate in FL at each communication round is able to simultaneously collect more local model updates, which has a positive impact on the convergence rate of the training process.
On the other hand, selecting more devices in each communication round enlarges the model aggregation error due to the inevitable magnitude misalignment at the edge server, which is detrimental to the convergence rate of the training process.
As a result, the edge devices should be appropriately selected to speed up the overall convergence rate of FL.
The optimization of device selection can be achieved by solving an $\ell_0$-norm minimization problem, which, however, requires the design of an efficient algorithm to accurately induce sparsity.
In addition, to prevent the learning performance from being degraded by the large model aggregation error, 
it is critical yet challenging to optimize the aggregation beamformer at the BS and the phase shifts at the IRS to combat severe channel fading and reduce the impact of additive noise.
Therefore, it is necessary to jointly optimize the device selection, the aggregation beamformer at the BS, and the phase shifts at the IRS to improve the learning efficiency and the prediction accuracy of the IRS-assisted AirComp-based FL system.
The main contributions of this paper are summarized as follows.
\begin{itemize}
	\item We propose an IRS-assisted AirComp-based FL system that is able to achieve fast yet reliable model aggregation. In particular, an IRS is deployed to mitigate the magnitude misalignment at the edge server during model aggregation, so as to schedule more edge devices to participate in FL at each communication round under a certain MSE requirement of each aggregated model, thereby achieving a lower training loss and higher test accuracy in fewer communication rounds. 
	 
	\item We propose to jointly optimize the device selection, the aggregation beamformer at the BS, and the phase shifts at the IRS, which, however, is highly intractable due to the sparse objective function as well as the biquadratic constraints due to the coupling between the aggregation beamformer at the BS and the phase shifts at the IRS.
	
	\item We first propose a two-step optimization framework to tackle the sparse objective function. Specifically, we induce the sparsity of the device selection by adopting $\ell_1$-relaxation for the $\ell_0$-norm objective function in the first step, followed by solving a series of MSE minimization problems in the second step to find the maximum feasible device selection set. Then, we propose an alternating optimization method to decouple the aggregation beamformer at the BS and the phase shifts at the IRS, thereby removing the obstacles caused by the biquadratic constraints in our problem.
	
	\item To address the nonconvex quadratic constraints in each subproblem resulting from the alternating optimization method, we convert them into rank-one constrained semidefinite programming (SDP) problems via matrix lifting. Subsequently, we reformulate the SDP problems as difference-of-convex (DC) programming problems by introducing DC representations for rank-one constraints, so as to effectively solve the low-rank optimization problem with the proposed DC algorithm. 
\end{itemize}
 
Simulation results demonstrate that the proposed IRS-assisted AirComp-based FL system is able to schedule more devices in each communication round under certain MSE requirements. The proposed two-step alternating DC algorithm achieves more accurate feasible set detection than the SDR approach.
Moreover, our proposed algorithm enables FL to converge faster and achieve more accurate prediction in the experiment of training a deep convolutional neural network (CNN) on the MNIST dataset \cite{lecun1998mnist} than other baseline schemes.

\subsection{Organization and Notations}

The rest of this paper is organized as follows.
Section \ref{SecSystem} describes the system model and problem formulation in IRS-assisted FL system.
In Section \ref{SecAlternating}, we propose a two-step framework to solve the problem.
Section \ref{SecDC} presents a two-step alternating DC algorithm for solving the problem.
The simulation results are provided in Section \ref{SecSimulation}.
Finally, Section \ref{SecConclusion} concludes this work.

Italic, boldface lower-case, and boldface upper-case letters denote scalar, vector, and matrix, respectively.
$\mathbb{R}^{m \times n}$ and $\mathbb{C}^{m \times n}$ denote the real and complex domain with the space of $m \times n$, respectively.
The operators $(\cdot)^\mathsf{T}$, $(\cdot)^\mathsf{H}$, $\mathrm{tr}(\cdot)$, and $\mathrm{diag}(\cdot)$ denote the transpose, Hermitian transpose, trace, and diagonal matrix, respectively. $\mathbb{E}[\cdot]$ denotes the statistical expectation. The operator $|\cdot|$ denotes the cardinality of a set or the absolute value of a scalar number, and $\|\cdot\|$ denotes the Euclidean norm.

\section{System Model and Problem Formulation} \label{SecSystem}

In this section, we develop a computation and communication co-design for fast and reliable model aggregation in AirComp-based FL systems, where an IRS is deployed to compensate for the magnitude reduction and misalignment of AirComp. 

\subsection{FL Model} \label{SecSystem_FL_model}

The IRS-assisted AirComp-based FL system under consideration consists of one $M$-antenna BS serving as an edge server, $K$ single-antenna edge devices, and an IRS with $N$ passive reflecting elements, as shown in Fig.~\ref{fig_system}. 
Edge device $k \in \mathcal{K} = \{1, 2, \dots, K\}$ has its own local dataset $\mathcal{D}_k$ with $D_k = |\mathcal{D}_k|$ labeled data samples $\{(\bm{u}_i, v_i)\}_{i = 1}^{D_k} \in \mathcal{D}_k$, where $(\bm{u}_i, v_i)$ denotes the input-output data pair consisting of training sample $\bm{u}_i$ and its ground-truth label $v_i$. 
For a given $d$-dimensional model parameter $\bm{z} \in \mathbb{R}^d$, the local loss function for device $k$  is defined as
\begin{align}
	F_k(\bm{z}) = \frac{1}{D_k} \sum_{(\bm{u}_i, v_i) \in \mathcal{D}_k} f(\bm{z}; \bm{u}_i, v_i),
\end{align}
where $f(\bm{z}; \bm{u}_i, v_i)$ denotes the sample-wise loss function.
Without loss of generality, we assume that all local datasets have a uniform size, i.e., $D_k = D, \, \forall \, k \in \mathcal{K}$, as in \cite{zhu2020broadband}.
Then, the global loss function with model parameter $\bm{z}$ can be represented as
\begin{align}
	F(\bm{z}) = \frac{1}{\sum_{k = 1}^K D_k} \sum_{k = 1}^K D_k F_k(\bm{z}) = \frac{1}{K} \sum_{k = 1}^K F_k(\bm{z}).
\end{align}
The learning process aims to optimize the model parameter $\bm{z}$ that minimizes the global loss function, i.e.,
\begin{align}
	\bm{z}^\star = \mathrm{arg} \underset{\bm{z} \in \mathbb{R}^d}{\mathrm{min}} \, F(\bm{z}).
\end{align}
To achieve this purpose, with the traditional method, the BS gathers all the local data from the edge devices to train a global model, which, however, not only increases the computation burden on the centralized server but also causes the privacy concern of edge devices.

\begin{figure}[t]
	\centering
	\includegraphics[scale=0.7]{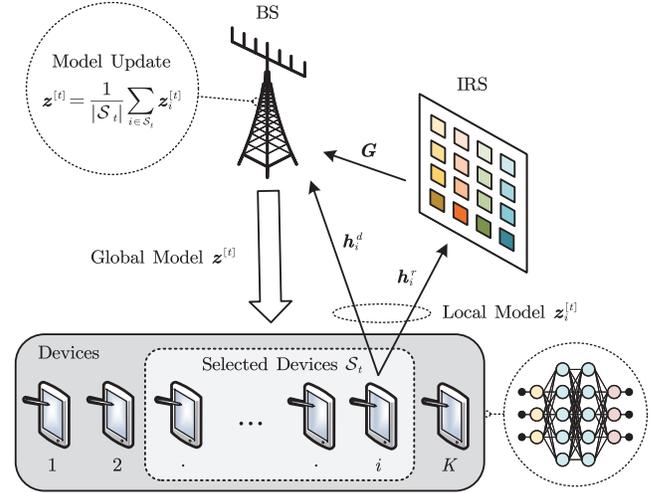}
	\caption{Illustration of an IRS-assisted AirComp-based FL system.}
	\label{fig_system}
\end{figure}

Fortunately, as an on-device distributed machine learning method, FL is able to collaboratively train a global model by coordinating the distributed edge devices to update the local model parameters according to the locally owned training data. Without the need of uploading the local data to the BS, this distributed learning method possesses the advantages of low latency, low power consumption, and high data privacy. 
In this paper, we leverage FedAvg \cite{FedAvg}, also referred to as model averaging, to train a global model.
Specifically, at the $t$-th communication round, the BS and the edge devices perform the following procedures
\begin{itemize}
	\item The BS broadcasts the current global model $\bm{z}^{[t-1]}$ to the edge devices belonging to a selected set, denoted as $\mathcal{S}_t \subseteq \mathcal{K}$.
	
	\item Based on the received global model $\bm{z}^{[t-1]}$, each edge device $i \in \mathcal{S}_t$ performs a local model update algorithm by utilizing its local dataset ${\mathcal{D}_i}$ to obtain an updated local model $\bm{z}_i^{[t]}$.
	
	\item All the local model updates are aggregated at the BS by taking an average to obtain the updated global model $\bm{z}^{[t]}$, which is given by
	\begin{align} \label{Aggregation}
		\bm{z}^{[t]} = \frac{1}{|\mathcal{S}_t|} \sum_{i \in \mathcal{S}_t} \bm{z}_i^{[t]}.
	\end{align}
\end{itemize}

In the following, we train a deep CNN on the MNIST dataset by using the FedAvg algorithm to show the impact of the number of selected devices on the training loss and the test accuracy under different model aggregation errors. The aggregated global model is given by
\begin{align}
	\hat{\bm{z}} = \frac{1}{|\mathcal{S}|} \sum_{i \in \mathcal{S}} \bm{z}_i + \bm{e},
\end{align}
where $\bm{e} \sim \mathcal{N}(0, \sigma_0^2 \bm{I})$.
As shown in Fig.~\ref{fig_training_loss_clients} and Fig.~\ref{fig_test_accuracy_clients}, selecting more devices to participate in the training process is able to obtain a model that provides lower training loss and higher test accuracy.
Besides, the training loss increases and the test accuracy decreases as the model aggregation error increases under the same number of selected devices. 
Therefore, it is critical to schedule more devices and reduce the aggregation error in each communication round for training a high quality model.
Note that the aggregation error is mainly caused by channel fading and additive noise during model aggregation and dissemination over wireless channels.
Motivated by these observations, we propose to jointly optimize the device selection, the aggregation beamformer at the BS, and the phase shifts at the IRS to maximize the number of edge devices participating in model aggregation while ensuring the aggregation error is within a certain bound.
Such a design is capable of enhancing the performance of FL in terms of the training loss and test accuracy.

\begin{figure}[t]
	\centering
	\includegraphics[scale=0.55]{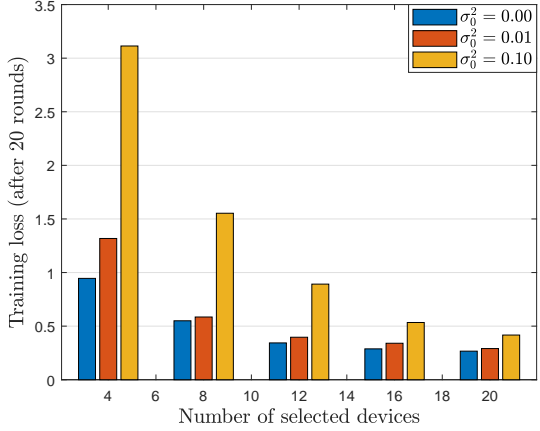}
	\caption{Training loss versus the number of selected devices under different model aggregation errors.}
	\label{fig_training_loss_clients}
\end{figure}

\begin{figure}[t]
	\centering
	\includegraphics[scale=0.55]{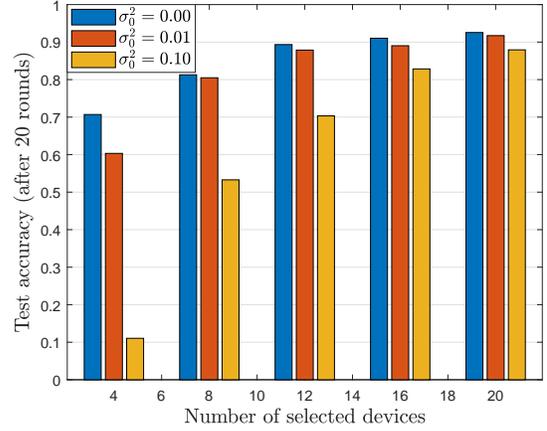}
	\caption{Test accuracy versus the number of selected devices under different model aggregation errors.}
	\label{fig_test_accuracy_clients}
\end{figure}

\subsection{Communication Model for IRS-Assisted AirComp}

Since the average sum in Eq.~\eqref{Aggregation} for model aggregation falls into the category of nomographic functions \cite{zhu2019mimo}, AirComp as a promising technique can be utilized to enhance the efficiency of model aggregation from distributed edge devices.
Let ${\phi _i}(\bm{x}) = \bm{x}$ denote the pre-processing function at device $i$ and $\psi(\bm{x}) = \frac{1}{|\mathcal{S}|} \bm{x}$ denote the post-processing function at the BS.
The target function for aggregating the local model updates at the BS can be expressed as
\begin{align}
\bm{z} = \psi \left( \sum\limits_{i \in \mathcal{S}} \phi _i (\bm{z}_i) \right),
\end{align}
where $\mathcal{S}$ is the device selection set.
We denote $\bm{s}_i = \bm{z}_i \in \mathbb{R}^d$ as the transmit symbol vector at device $i$.
Without loss of generality, the transmit symbols are assumed to be independent and normalized to have zero mean and unit variance, i.e., $\mathbb{E}[{\bm{s}_i}\bm{s}_i^\mathsf{H}] = \bm{I}_d$ \cite{zhu2020broadband}.
Due to the limited capacity for data storage and computing at the edge devices, the dimension of model parameters is set to ensure that the entire model parameters can be transmitted within one transmission interval \cite{liu2020privacy}.
To simplify the notation, let $\bar{s}_i$ denote a typical entry of $\bm{s}_i$ within one communication interval.
The target function to be estimated at the BS is given by
\begin{align}
g = \sum_{i \in \mathcal{S}} \phi_i(\bar{s}_i) = \sum_{i \in \mathcal{S}} \bar{s}_i.
\end{align}
The transmitted signals may encounter detrimental channel conditions during the model aggregation process through AirComp in the uplink, which leads to magnitude reduction and misalignment, thereby enlarging the aggregation error at the BS.
To tackle this issue, we propose to deploy an IRS to alleviate the distortion of AirComp.


Let $\bm{h}_i^\mathrm{d} \in {\mathbb{C}^M}$, $\bm{h}_i^\mathrm{r} \in {\mathbb{C}^N}$, and $\bm{G} \in {\mathbb{C}^{M \times N}}$ denote the channel responses from device $i$ to the BS, from device $i$ to the IRS, and from the IRS to the BS, respectively.
The channel gain of each link is assumed to be invariant within one transmission interval.
In addition, with various channel estimation methods proposed for IRS-assisted wireless networks \cite{mishra2019channel, zheng2020intelligent, wang2020channel}, we assume that the perfect CSI is available in this paper, as in \cite{wu2020towards, wu2019intelligent, ReconfigurableIRS, chen2019intelligent_secure, AirIRS, wang2020wirelesspowered}.
The diagonal phase-shift matrix of the IRS is denoted by $\bm{\Theta}  = {\textrm{diag}}({\beta e^{j{\theta _1}}}, \dots, {\beta e^{j{\theta _N}}}) \in {\mathbb{C}^{N \times N}}$, where ${\theta _n} \in [0, 2\pi)$ denotes the phase shift of element $n$ and $\beta \in [0, 1]$ is the amplitude reflection coefficient on the incident signals. Without loss of generality, we assume $\beta = 1$ in this paper \cite{wu2019intelligent}.
Compounded with reflected signals, the received signal at the BS is given by
\begin{align}
\bm{y} = \sum\limits_{i \in \mathcal{S}} {(\bm{G\Theta h}_i^\mathrm{r} + \bm{h}_i^\mathrm{d}){w_i}{\bar{s}_i}}  + \bm{n},
\end{align}
where ${w_i} \in \mathbb{C}$ is the transmit scalar of device $i$ and $\bm{n}\sim\mathcal{CN}\left( {\bm{0},{\sigma ^2}\bm{I}} \right)$ is the additive white Gaussian noise (AWGN).

By denoting the aggregation beamforming vector at the BS as $\bm{m} \in {\mathbb{C}^M}$, the estimated target function before post-processing can be expressed as
\begin{align}
\hat g = \frac{1}{\sqrt{\eta}}{\bm{m}^\mathsf{H}}\bm{y} = \frac{1}{{\sqrt \eta  }}{\bm{m}^\mathsf{H}}\sum\limits_{i \in \mathcal{S}} {(\bm{G\Theta h}_i^\mathrm{r} + \bm{h}_i^\mathrm{d}){w_i}{\bar{s}_i}}  + \frac{1}{\sqrt \eta }{\bm{m}^\mathsf{H}}\bm{n},
\end{align}
where $\eta$ is a denoising factor. Thus, we can obtain the aggregated global model at the BS by post-processing $\hat z=\psi(\hat g)$.
The distortion of the estimated aggregated model, which quantifies the performance for global model aggregation via AirComp, can be measured by the MSE between $\hat g$ and the target value $g$ as follows
\begin{align}\label{equMSE}
& \mathsf{MSE}\left(\hat{g}, g\right) = \mathbb{E}\left[{\left| {\hat g - g} \right|^2}\right] \notag\\
&= \sum\limits_{i \in \mathcal{S}} {{\left| \frac{1}{\sqrt \eta }{\bm{m}^\mathsf{H}} (\bm{G\Theta h}_i^\mathrm{r} + \bm{h}_i^\mathrm{d}) {w_i} - 1\right|}^2}  + \frac{{\sigma ^2}{{\left\|\bm{m}\right\|}^2}}{\eta }.
\end{align}
The following proposition presents the optimal transmit scalars at the edge devices to minimize the MSE.
\begin{prop} \label{transmitter}
Given the aggregation beamforming vector $\bm{m}$ and the phase-shift matrix $\bm{\Theta}$, the minimum MSE is obtained by using the following optimal transmit scalar
\begin{align} \label{Optimalwi}
w_i^\star = \sqrt{\eta} \frac{(\bm{m}^\mathsf{H}(\bm{G} \bm{\Theta} \bm{h}_i^\mathrm{r} + \bm{h}_i^\mathrm{d}))^\mathsf{H}}{|\bm{m}^\mathsf{H} (\bm{G} \bm{\Theta} \bm{h}_i^\mathrm{r} + \bm{h}_i^\mathrm{d})|^2}, \, \forall \, i \in \mathcal{S}.
\end{align}
\end{prop}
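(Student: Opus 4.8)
The plan is to exploit the separable structure of the MSE in Eq.~\eqref{equMSE} with respect to the transmit scalars. With $\bm{m}$ and $\bm{\Theta}$ held fixed, the noise term $\sigma^2\|\bm{m}\|^2/\eta$ does not involve any $w_i$, and each $w_i$ appears in exactly one summand of $\sum_{i\in\mathcal{S}}$. Hence minimizing $\mathsf{MSE}(\hat g,g)$ over the collection $\{w_i\}_{i\in\mathcal{S}}$ decouples into $|\mathcal{S}|$ independent scalar problems
$$\min_{w_i\in\mathbb{C}}\ \bigl|a_i w_i - 1\bigr|^2,\qquad a_i := \frac{1}{\sqrt{\eta}}\,\bm{m}^\mathsf{H}\bigl(\bm{G}\bm{\Theta}\bm{h}_i^\mathrm{r}+\bm{h}_i^\mathrm{d}\bigr)\in\mathbb{C}.$$
First I would establish this reduction explicitly, observing also that each subproblem is a convex quadratic in $(\operatorname{Re} w_i,\operatorname{Im} w_i)$, so a stationary point is a global minimizer.

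Next I would solve each elementary least-squares problem. Using $|a_i w_i - 1|^2 = |a_i|^2\,\bigl|w_i - a_i^{-1}\bigr|^2$ when $a_i\neq 0$ (equivalently, setting the Wirtinger derivative $\partial/\partial\overline{w_i}$ of the objective to zero, which gives $a_i\overline{a_i}\,\overline{w_i} = \overline{a_i}$), the unique minimizer is $w_i^\star = a_i^{-1}$, at which the $i$-th summand vanishes. Substituting the definition of $a_i$, the factors $1/\sqrt{\eta}$ invert to a multiplicative $\sqrt{\eta}$, and rationalizing $a_i^{-1} = \overline{a_i}/|a_i|^2$ turns the scalar conjugate $\overline{a_i}$ into the Hermitian-transpose form, yielding precisely Eq.~\eqref{Optimalwi}. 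As a by-product I would record that the attained minimum is $\mathsf{MSE} = \sigma^2\|\bm{m}\|^2/\eta$, i.e., all magnitude-misalignment terms are driven to zero and only receiver noise remains.

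There is essentially no deep obstacle; the computation is routine once separability is noted. The one point requiring care is the implicit non-degeneracy condition $a_i\neq 0$, i.e., $\bm{m}^\mathsf{H}(\bm{G}\bm{\Theta}\bm{h}_i^\mathrm{r}+\bm{h}_i^\mathrm{d})\neq 0$ for every $i\in\mathcal{S}$: if device $i$'s effective (direct-plus-reflected) channel is nulled by the chosen beamformer and phase shifts, no finite scalar can compensate. I would argue that this is automatically satisfied for any device admitted under a finite MSE target, which is exactly what the device-selection step in the sequel enforces, so Eq.~\eqref{Optimalwi} is well defined on $\mathcal{S}$. Finally, I would remark that the proposition presumes no per-device transmit-power budget is binding; if one were imposed, the denoising factor $\eta$ would be scaled down to keep $|w_i^\star|^2$ feasible, but the channel-inversion direction of $w_i^\star$ would be unchanged.
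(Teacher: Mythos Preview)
Your proposal is correct and follows essentially the same approach as the paper: both arguments observe that the misalignment term in Eq.~\eqref{equMSE} is nonnegative and independent of the noise term, so the MSE is bounded below by $\sigma^2\|\bm{m}\|^2/\eta$, and this bound is achieved by the zero-forcing (channel-inversion) choice $w_i^\star = a_i^{-1}$. Your write-up is more explicit about the separability and the non-degeneracy caveat $a_i\neq 0$, but the underlying idea is identical.
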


\begin{proof}
Please refer to Appendix \ref{ap_transmitter}.
\end{proof}

The transmit power of device $i$ is constrained by a given maximum transmit power ${P_0}>0$, i.e., ${\left| {{w_i}} \right|^2} \le {P_0}$.
With the optimal transmit scalar $w_i^\star$ given in \eqref{Optimalwi}, we have
\begin{align}
\eta = P_0 \, \underset{i \in \mathcal{S}}{\mathrm{min}} \, |\bm{m}^\mathsf{H} (\bm{G} \bm{\Theta} \bm{h}_i^\mathrm{r} + \bm{h}_i^\mathrm{d})|^2.
\end{align}
Therefore, the minimum MSE is given by
\begin{align}
\mathsf{MSE}\left(\hat{g}, g\right) = \frac{\sigma^2}{P_0} \, \underset{i \in \mathcal{S}}{\mathrm{max}} \, \frac{\|\bm{m}\|^2}{{|{{\bm{m}^\mathsf{H}}(\bm{G\Theta h}_i^\mathrm{r} + \bm{h}_i^\mathrm{d})}|}^2}.
\end{align}

\subsection{Problem Formulation}


As observed in Section \ref{SecSystem_FL_model}, we aim to maximize the number of selected devices while satisfying the MSE requirement of model aggregation to speed up the convergence of the training process and to avoid the notable reduction of the prediction accuracy. Specifically, given the MSE requirement $\gamma > 0$ for model aggregation, the corresponding optimization problem can be formulated as
\begin{subequations}
\begin{align}
\underset{\mathcal{S}, \bm{m}, \bm{\Theta}}{\text{maximize}} \hspace{3mm} & |\mathcal{S}| \\
\text{subject to} \hspace{3mm} & {\mathop {\max }\limits_{i \in \mathcal{S}} \frac{{\| \bm{m} \|}^2}{{| {{\bm{m}^\mathsf{H}}(\bm{G\Theta}\bm{h}_i^\mathrm{r} + \bm{h}_i^\mathrm{d})} |}^2}} \le \gamma, \label{consMSE}\\
& |\bm{\Theta}_{n, n}| = 1, \, \forall \, n \in \{1, \dots, N\}.\label{constheta}
\end{align}
\end{subequations}
To facilitate the algorithm design, the MSE constraint \eqref{consMSE} can be rewritten as nonconvex constraints with quadratic and biquadratic terms, as presented in Proposition \ref{QuadCons}.
\begin{prop} \label{QuadCons}
The constraint \eqref{consMSE} can be equivalently rewritten as the following constraints:
\begin{align} \label{consquad}
	\|\bm{m}\|^2 - \gamma |\bm{m}^\mathsf{H} (\bm{G} \bm{\Theta} \bm{h}_i^\mathrm{r} + \bm{h}_i^\mathrm{d})|^2 \le 0, i \in \mathcal{S},
\end{align}
where ${\left\| \bm{m} \right\|^2} \ge 1$.
\end{prop}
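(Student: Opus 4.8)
The plan is to lean on two elementary facts: a maximum of finitely many nonnegative numbers is $\le\gamma$ exactly when every one of them is, and the ratio appearing in \eqref{consMSE} is invariant under nonzero scaling of $\bm{m}$.

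First I would rewrite the maximum in \eqref{consMSE} termwise. For any $\bm{m},\bm{\Theta}$ satisfying \eqref{consMSE}, each denominator $|\bm{m}^\mathsf{H}(\bm{G\Theta h}_i^\mathrm{r}+\bm{h}_i^\mathrm{d})|^2$, $i\in\mathcal{S}$, must be strictly positive (otherwise the $i$-th ratio is undefined or infinite), and in particular $\bm{m}\neq\bm{0}$. Hence \eqref{consMSE} holds iff
\begin{align}
\frac{\|\bm{m}\|^2}{|\bm{m}^\mathsf{H}(\bm{G\Theta h}_i^\mathrm{r}+\bm{h}_i^\mathrm{d})|^2}\le\gamma,\quad\forall\,i\in\mathcal{S},
\end{align}
and, multiplying through by the positive denominator, iff $\|\bm{m}\|^2-\gamma|\bm{m}^\mathsf{H}(\bm{G}\bm{\Theta}\bm{h}_i^\mathrm{r}+\bm{h}_i^\mathrm{d})|^2\le 0$ for all $i\in\mathcal{S}$, i.e.\ \eqref{consquad}. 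This direction needs no normalization and is routine.

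The real content is the auxiliary condition $\|\bm{m}\|^2\ge1$, and here I would argue it leaves the set of achievable device selections $\mathcal{S}$ unchanged. Since $\bm{m}\mapsto c\bm{m}$ for $c\in\mathbb{C}\setminus\{0\}$ does not change the ratio in \eqref{consMSE} nor the sign of the left-hand side of \eqref{consquad}, any configuration $(\mathcal{S},\bm{m},\bm{\Theta})$ feasible for \eqref{consquad} with $\bm{m}\neq\bm{0}$ can be rescaled (e.g.\ to $\bm{m}/\|\bm{m}\|$, or any scaling of norm at least one) so as to satisfy $\|\bm{m}\|^2\ge1$ while keeping $\mathcal{S}$ and $\bm{\Theta}$ fixed and preserving \eqref{consquad}; moreover, once $\bm{m}\neq\bm{0}$, \eqref{consquad} already forces $|\bm{m}^\mathsf{H}(\bm{G}\bm{\Theta}\bm{h}_i^\mathrm{r}+\bm{h}_i^\mathrm{d})|^2\ge\|\bm{m}\|^2/\gamma>0$, so the termwise equivalence of the previous paragraph applies in reverse and $(\mathcal{S},\bm{m},\bm{\Theta})$ is feasible for the original MSE constraint. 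Consequently the problem with \eqref{consMSE} and the problem with \eqref{consquad} together with $\|\bm{m}\|^2\ge1$ admit the same feasible device sets, hence the same optimal $|\mathcal{S}|$.

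The only step requiring care — the ``main obstacle,'' such as it is — is the bookkeeping around the degenerate point $\bm{m}=\bm{0}$: clearing the denominator in \eqref{consMSE} adjoins this spurious solution to the feasible set, and one must verify both that $\|\bm{m}\|^2\ge1$ removes it and that its removal is costless, because every genuinely feasible $(\bm{m},\bm{\Theta})$ can be scaled into the region $\|\bm{m}\|^2\ge1$. Everything else is one-line algebra.
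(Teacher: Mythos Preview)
Your proposal is correct and follows essentially the same approach as the paper: rewrite the maximum termwise, clear the positive denominator to obtain \eqref{consquad}, and then use the scale invariance of the constraint to normalize $\bm{m}$ so that $\|\bm{m}\|^2\ge1$, thereby excluding the spurious solution $\bm{m}=\bm{0}$. The paper's proof is terser (it introduces a scaling parameter $\tau$ and uses $F_i(\bm{m}/\sqrt{\tau})=F_i(\bm{m})/\tau$), but the underlying idea and logical structure are identical to yours.
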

\begin{proof}
Please refer to Appendix \ref{ap_quadcons}.
\end{proof}

According to Proposition \ref{QuadCons}, the objective function $\left| \mathcal{S} \right|$ represents the number of feasible MSE constraints \eqref{consquad}, which should be maximized under the regularity condition $\|\bm{m}\|^2 \ge 1$.
By adding an auxiliary variable $\bm{x}$ \cite{SmoothedLp}, we equivalently transform the problem of maximizing the number of feasible MSE constraints into the problem of minimizing the number of nonzero $x_i$'s.
Hence, we turn to solve the following sparse optimization problem
\begin{subequations}
\begin{align}
\mathscr{P}:
\underset{\bm{x} \in \mathbb{R}_+^K, \bm{m}, \bm{\Theta}}{\text{minimize}} {\ } & \|\bm{x}\|_0 \\
\text{subject to} {\ } & \|\bm{m}\|^2 - \gamma |\bm{m}^\mathsf{H} (\bm{G} \bm{\Theta} \bm{h}_i^\mathrm{r} + \bm{h}_i^\mathrm{d})|^2 \le x_i, \, \forall \, i \in \mathcal{K}, \label{consPmse} \\
& \|\bm{m}\|^2 \ge 1, \label{consPm} \\
& |\bm{\Theta}_{n, n}| = 1, \, \forall \, n \in \{1, \dots, N\}. \label{consPtheta}
\end{align}
\end{subequations}

Note that the selection of each edge device is indicated by the sparsity structure of $\bm{x}$, i.e., $x_i = 0$ indicates that device $i$ can be selected while satisfying the MSE requirement.
Due to the sparse objective function and nonconvex constraints with biquadratic \eqref{consPmse} and quadratic \eqref{consPm} terms, problem $\mathscr{P}$ is computationally difficult. 
To tackle this issue, we shall propose a two-step alternating low-rank optimization framework in the following section.

\section{Alternating Low-Rank Optimization Framework for Model Aggregation}\label{SecAlternating}

In this section, we propose a two-step framework to solve problem $\mathscr{P}$ for IRS-assisted AirComp-based FL with device selection,
followed by proposing to use the alternating optimization approach to solve the problem in each step.



\subsection{Proposed Two-Step Framework for Solving Problem $\mathscr{P}$} \label{SubsecTwostep}

The main idea of our proposed two-step framework is to induce the sparsity of $\bm{x}$ in the first step, so as to determine the priority for each device to be selected. With the obtained priority vector, we then solve a series of MSE minimization problems to find the maximum feasible device set while satisfying the MSE requirement in the second step.

\subsubsection{Sparsity Inducing}

For the nonconvex sparse objective function being in the form of $\ell_0$-norm, we adopt the well-recognized $\ell_1$-norm as a convex surrogate \cite{ConOp}.
To solve problem $\mathscr{P}$, we shall solve the following problem in the first step:
\begin{align}
\mathscr{P}_1:
\underset{\bm{x} \in \mathbb{R}_+^K, \bm{m}, \bm{\Theta}}{\text{minimize}} \hspace{3mm} & \|\bm{x}\|_1 \notag \\
\text{subject to} \hspace{3mm} & \mathrm{constraints \, \eqref{consPmse}, \, \eqref{consPm}, \, \eqref{consPtheta}}.
\end{align}
After solving problem $\mathscr{P}_1$, we proceed to the second step to check the feasibility of the selected devices and find the maximum number of edge devices under the MSE constraint.

\subsubsection{Feasibility Detection}

The value of $x_i$ obtained from the first step characterizes the disparity between the MSE requirement and the achievable MSE for device $i$.
Therefore, the smaller the value of $x_i$, the higher priority device $i$ being selected in the second step.
We sort $\{x_i\}_{i = 1}^K$ in an ascending order ${x_{\pi (1)}} \le \dots \le {x_{\pi (K)}}$ to determine the priority of edge devices, where $x_{\pi(i)}$ denotes the $i$-th smallest element in $\{x_i\}_{i = 1}^K$.
We adopt the bisection method to find the maximum value of $k$ that enables all devices in the set $\mathcal{S}^{[k]} = \{\pi(1), \pi(2), \dots ,\pi(k)\}$ to be feasibly selected.
Specifically, for a given device set $\mathcal{S}^{[k]}$,
we check the feasibility via comparing the MSE requirement with the minimum maximal MSE of selected devices in $\mathcal{S}^{[k]}$ obtained from the following problem:
\begin{subequations} \label{ProbS2}
\begin{align} 
\mathscr{P}_2:
\underset{\bm{m}, \bm{\Theta}}{\text{minimize}} \hspace{3mm} & \underset{i \in \mathcal{S}^{[k]}}{\mathrm{max}} {\ } \frac{\|\bm{m}\|^2}{|\bm{m}^\mathsf{H} (\bm{G\Theta h}_i^\mathrm{r} + \bm{h}_i^\mathrm{d})|^2} \label{objfuncPS2} \\
\text{subject to} \hspace{3mm} &|\bm{\Theta}_{n, n}| = 1, \, \forall \, n \in \{1, \dots, N\}.
\end{align}
\end{subequations}
If the optimal objective value of problem \eqref{ProbS2} is less than the required MSE, then set $\mathcal{S}^{[k]}$ is considered as a feasible set.

\subsection{Alternating Low-Rank Optimization} \label{SecAlter_LowRank}

It can be observed that constraint \eqref{consPmse} and objective function \eqref{objfuncPS2} are both nonconvex due to the coupled optimization variables. To address this issue, we propose to apply alternating optimization \cite{AirIRS}.

\subsubsection{Sparsity Inducing}

In the first step, variables $(\bm{x}, \bm{m})$ and $\bm{\Theta}$ of problem $\mathscr{P}_1$ can be optimized alternately. Specifically, when the phase-shift matrix $\bm{\Theta}$ is fixed (i.e., the combined channel vector $\bm{h}_i = \bm{G} \bm{\Theta} \bm{h}_i^\mathrm{r} + \bm{h}_i^\mathrm{d}$ between device $i$ and the BS is fixed), the problem can be expressed as
\begin{align} \label{ProbS1.1}
\underset{\bm{x} \in \mathbb{R}_+^K, \bm{m}}{\text{minimize}} \hspace{3mm} & \|\bm{x}\|_1 \notag \\
\text{subject to} \hspace{3mm} & \mathrm{constraints \, \eqref{consPmse}, \, \eqref{consPm}}.
\end{align}
To address the nonconvexity of biquadratic and quadratic constraints \eqref{consPmse} and \eqref{consPm}, we further transform problem \eqref{ProbS1.1} into an SDP problem via the matrix lifting technique \cite{MatrixLift}. By denoting $\bm{M} = \bm{m} \bm{m}^\mathsf{H}$, problem \eqref{ProbS1.1} can be rewritten as a low-rank optimization problem:
\begin{align}
	\mathscr{P}_{1, 1}:
	\underset{\bm{x} \in \mathbb{R}_+^K, \bm{M}}{\text{minimize}} \hspace{3mm} & \|\bm{x}\|_1 \notag \\
	\text{subject to} \hspace{3mm} & \mathrm{tr}(\bm{M}) - \gamma \cdot \mathrm{tr}(\bm{M} \bm{H}_i) \le x_i, \, \forall \, i \in \mathcal{K}, \notag \\
	& \mathrm{tr}(\bm{M}) \ge 1, \notag \\
	& \bm{M} \succeq \bm{0}, \, \mathrm{rank}(\bm{M}) = 1,
\end{align}
where $\bm{H}_i = \bm{h}_i \bm{h}_i^\mathsf{H}$.

On the other hand, when the auxiliary vector $\bm{x}$ and the aggregation beamforming vector $\bm{m}$ are fixed, problem $\mathscr{P}_1$ is reduced to be a feasibility detection problem of phase-shift matrix $\bm{\Theta}$.
By denoting $\bm{v} = [e^{j \theta_1}, \dots, e^{j \theta_N}]^\mathsf{T}$, $\bm{a}_i^\mathsf{H} = \bm{m}^\mathsf{H} \bm{G} \mathrm{diag}(\bm{h}_i^\mathrm{r})$, and $c_i = \bm{m}^\mathsf{H} \bm{h}_i^\mathrm{d}$, the problem can be expressed as
\begin{subequations} \label{ProbS1.2}
\begin{align}
\text{find} \hspace{3mm} & \bm{v} \\
\text{subject to} \hspace{3mm} & \|\bm{m}\|^2 - \gamma |\bm{a}^\mathsf{H} \bm{v} + c_i|^2 \le x_i, \, \forall \, i \in \mathcal{K}, \label{consPS1.2mse} \\
& |v_n| = 1, \, \forall \, n \in \{1, \dots, N\}.
\end{align}
\end{subequations}
We denote $\bar{\bm{v}} = [\bm{v}, t]^\mathsf{T}$ by introducing an auxiliary variable $t$. Constraints \eqref{consPS1.2mse} can be rewritten as
\begin{align}
\|\bm{m}\|^2 - \gamma \left(\bar{\bm{v}}^\mathsf{H} \bm{R}_i \bar{\bm{v}} + |c_i|^2\right) \le x_i, \, \forall \, i \in \mathcal{K},
\end{align}
where 
\begin{align}
\bm{R}_i = \left[\begin{array}{cc}
	\bm{a}_i \bm{a}_i^\mathsf{H} & \bm{a}_i c_i \\
	c_i^\mathsf{H} \bm{a}_i^\mathsf{H} & 0
\end{array} \right].
\end{align}
Since $\bar{\bm{v}}^\mathsf{H} \bm{R}_i \bar{\bm{v}} = \mathrm{tr}(\bm{R}_i \bar{\bm{v}} \bar{\bm{v}}^\mathsf{H})$, we lift $\bar{\bm{v}}$ as a positive semidefinite (PSD) matrix $\bm{V} = \bar{\bm{v}} \bar{\bm{v}}^\mathsf{H}$ with $\mathrm{rank}(\bm{V}) = 1$. Problem \eqref{ProbS1.2} can be equivalently reformulated as the following low-rank matrix optimization problem:
\begin{align}
\mathscr{P}_{1, 2}:
\text{find} \hspace{3mm} & \bm{V} \notag \\
\text{subject to} \hspace{3mm} & \|\bm{m}\|^2 - \gamma \left(\mathrm{tr}(\bm{R}_i \bm{V}) + |c_i|^2\right) \le x_i, \, \forall \, i \in \mathcal{K}, \notag \\
& \bm{V}_{n, n} = 1, \, \forall \, n \in \{1, \dots, N + 1\}, \notag \\
& \bm{V} \succeq \bm{0}, \, \mathrm{rank}(\bm{V}) = 1.
\end{align}

\subsubsection{Feasibility Detection}

In the second step, we first reformulate problem $\mathscr{P}_2$ as the following problem \cite{AirIRS}:
\begin{align}
	\underset{\bm{m}, \bm{\Theta}}{\text{minimize}} \hspace{3mm} & \|\bm{m}\|^2 \notag \\
	\text{subject to} \hspace{3mm} & |\bm{m}^\mathsf{H} (\bm{G} \bm{\Theta} \bm{h}_i^\mathrm{r} + \bm{h}_i^\mathrm{d})|^2 \ge 1, \, \forall \, i \in \mathcal{S}^{[k]}, \notag \\
	& |\bm{\Theta}_{n, n}| = 1, \, \forall \, n \in \{1, \dots, N\}.
\end{align}
To decouple the optimization variables, we optimize the aggregation beamforming vector $\bm{m}$ and the phase-shift matrix $\bm{\Theta}$ alternately.
Specifically, given the phase-shift matrix $\bm{\Theta}$, we have
\begin{align}
	\underset{\bm{m}}{\text{minimize}} \hspace{3mm} & \|\bm{m}\|^2 \notag \\
	\text{subject to} \hspace{3mm} & |\bm{m}^\mathsf{H} \bm{h}_i|^2 \ge 1, \, \forall \, i \in \mathcal{S}^{[k]}.
\end{align}
This problem can be further represented as a low-rank matrix optimization problem:
\begin{align}
\mathscr{P}_{2, 1}:
\underset{\bm{M}}{\text{minimize}} \hspace{3mm} & \mathrm{tr}(\bm{M}) \notag \\
\text{subject to} \hspace{3mm} & \mathrm{tr}(\bm{M} \bm{H}_i) \ge 1, \, \forall \, i \in \mathcal{S}^{[k]}, \notag \\
& \bm{M} \succeq \bm{0}, \, \mathrm{rank}(\bm{M}) = 1.
\end{align}

On the other hand, given the aggregation beamforming vector $\bm{m}$, we have
\begin{align}
	\text{find} \hspace{3mm} & \bm{v} \notag \\
	\text{subject to} \hspace{3mm} & |\bm{a}^\mathsf{H} \bm{v} + c_i|^2 \ge 1, \, \forall \, i \in \mathcal{S}^{[k]}, \notag \\
	& |v_n| = 1, \, \forall \, n \in \{1, \dots, N\},
\end{align}
and its corresponding low-rank matrix optimization problem is given by
\begin{align}
\mathscr{P}_{2, 2}:
\text{find} \hspace{3mm} & \bm{V} \notag \\
\text{subject to} \hspace{3mm} & \mathrm{tr}(\bm{R}_i \bm{V}) + |c_i|^2 \ge 1, \, \forall \, i \in \mathcal{S}^{[k]}, \notag \\
& \bm{V}_{n, n} = 1, \, \forall \, n \in \{1, \dots, N + 1\}, \notag \\
& \bm{V} \succeq \bm{0}, \, \mathrm{rank}(\bm{V}) = 1.
\end{align}

In summary, the entire proposed two-step alternating DC algorithm for solving the sparse and low-rank optimization problem $\mathscr{P}$ is presented in Algorithm \ref{alg_two_step_alternating_dc}.
The resulting problems $\mathscr{P}_{1, 1}$, $\mathscr{P}_{1, 2}$, $\mathscr{P}_{2, 1}$, and $\mathscr{P}_{2, 2}$ in the alternating low-rank optimization are still nonconvex because of the fixed rank-one constraints.
This nonconvexity issue can be tackled by simply dropping the nonconvex rank constraints via the SDR technique \cite{MatrixLift}.
In the procedure of solving the relaxed SDP problems, if the obtained solution fails to be rank-one, the Gaussian randomization method \cite{MatrixLift} can be adopted to obtain a suboptimal solution.
However, if the number of antennas and the number of reflecting elements are large, the performance of the SDR technique degenerates in the resulting high-dimensional optimization problems due to the low probability of returning rank-one solutions \cite{chen2018uniform}.
To address the limitations of the SDR technique, we present a novel DC programming approach for inducing rank-one solutions in the next section.

\begin{algorithm}[t]
	\caption{Two-step alternating DC algorithm for solving problem $\mathscr{P}$ in FL with device selection.}
	\label{alg_two_step_alternating_dc}
	
	\textbf{Step 1:} Sparsity Inducing \\
	\KwIn{Initial point $\bm{\Theta}^0$ and predefined threshold $\epsilon > 0$.}
	\For{$t \leftarrow 1, 2, \dots$}{
		Given $\bm{\Theta}^{t - 1}$, obtain solution $(\bm{x}^t, \bm{m}^t)$ by solving problem $\mathscr{P}_{1, 1}$. \\
		Given $(\bm{x}^t, \bm{m}^t)$, obtain solution $\bm{\Theta}^t$ by solving problem $\mathscr{P}_{1, 2}$. \\
		\If{Decrease of the objective value of problem $\mathscr{P}_1$ is below $\epsilon$}{
			\textbf{break}.
		}
	}
	\KwOut{$\bm{x}^\star \leftarrow \bm{x}^t$.}
	~ \\
	\textbf{Step 2:} Feasibility Detection \\
	\KwIn{Set $\mathcal{S}^{[K]} = \{\pi(1), \pi(2), \dots, \pi(K)\}$ obtained by ordering $\bm{x}^\star$ in an ascending order as $x_{\pi(1)} \le \dots \le x_{\pi(K)}$, $N_\mathrm{low} \leftarrow 0$, $N_\mathrm{up} \leftarrow K$, and predefined threshold $\epsilon > 0$.}
	$k \leftarrow K$. \\
	\While{$N_\mathrm{up} - N_\mathrm{low} > 1$}{
		Initialize $\bm{\Theta}^0$ and ${\mathcal{S}^{[k]}}\leftarrow\left\{ {\pi (1),\pi (2), \dots ,\pi (k)} \right\}$. \\
		\For{$t \leftarrow 1, 2, \dots$}{
			Given $\bm{\Theta}^{t - 1}$, obtain solution $\bm{m}^t$ by solving problem $\mathscr{P}_{2, 1}$. \\
			\uIf{Maximum $\mathsf{MSE} \le \gamma$}{
				$N_\mathrm{low} \leftarrow k$. \\
				$\bar{\bm{m}} \leftarrow \bm{m}^t$. \\
				$k \leftarrow \lfloor\frac{N_\mathrm{low} + N_\mathrm{up}}{2}\rfloor$. \\
				\textbf{Break}. \\
			}
			~ \\
			\ElseIf{Decrease of the objective value of problem $\mathscr{P}_2$ is below $\epsilon$}{
				$N_\mathrm{up} \leftarrow k$. \\
				$k \leftarrow \lfloor\frac{N_\mathrm{low} + N_\mathrm{up}}{2}\rfloor$. \\
				\textbf{Break}. \\
			}
			Given $\bm{m}^t$, obtain solution $\bm{\Theta}^t$ by solving problem $\mathscr{P}_{2, 2}$.
		}
	}
	\KwOut{$\bm{m}^\star  \leftarrow \bar{\bm{m}}$ and the set of selected devices $\mathcal{S}^{[k]} \leftarrow \{\pi(1), \pi(2), \dots, \pi(k^\star)\}$ with $k^\star \leftarrow N_\mathrm{low}$.}
\end{algorithm}


\section{Alternating DC Approach For Low-Rank Optimization}\label{SecDC}

In this section, we present a DC formulation for the rank-one constrained SDP problems in the alternating procedure, followed by proposing a two-step alternating DC algorithm to solve problem $\mathscr{P}$ in IRS-assisted AirComp-based FL systems.

\subsection{DC Formulation for Rank-One Constrained Problems}

The accurate detection of the rank-one constraint plays a critical role in precisely detecting the feasibility of nonconvex quadratic constraints, which is important in our two-step framework for device selection.
Therefore, we provide a DC representation for the rank-one constraints in the aforementioned problems in Section \ref{SecAlter_LowRank}.

The rank-one constraint of PSD matrix $\bm{M} \in \mathbb{C}^{M\times M}$ can be equivalently rewritten as
\begin{align}
\|\{\sigma_i(\bm{M})\}_{i = 1}^M\|_0 = 1,
\end{align}
where $\sigma_i(\bm{M})$ denotes the $i$-th largest singular value of matrix $\bm{M}$.
Furthermore, since the trace norm and the spectral norm are represented by
\begin{align}
\mathrm{tr}(\bm{M}) = \sum\limits_{i = 1}^M {{\sigma _i}(\bm{M})}\ \textrm{and}\ \|\bm{M}\|_2 = \sigma_1(\bm{M}),
\end{align}
respectively, we have \cite{yang2020federated}
\begin{align}\label{DCrep}
\textrm{rank}(\bm{M})=1 \Leftrightarrow \mathrm{tr}(\bm{M})-\|\bm{M}\|_2 = 0,
\end{align}
with $\mathrm{tr}(\bm{M}) > 0$.
Therefore, we can use a DC penalty to induce rank-one solutions. The corresponding DC formulation for problem $\mathscr{P}_{1, 1}$ is given by
\begin{align}\label{Prob11DC}
\mathscr{P}'_{1, 1}:
\underset{\bm{x} \in \mathbb{R}_+^K, \bm{M}}{\text{minimize}} \hspace{3mm} & \|\bm{x}\|_1 + \rho \left(\mathrm{tr}(\bm{M}) - \|\bm{M}\|_2\right) \notag \\
\text{subject to} \hspace{3mm} & \mathrm{tr}(\bm{M}) - \gamma \cdot \mathrm{tr}(\bm{M} \bm{H}_i) \le x_i, \, \forall \, i \in \mathcal{K}, \notag \\
& \mathrm{tr}(\bm{M}) \ge 1, \, \bm{M} \succeq \bm{0},
\end{align}
where $\rho > 0$ denotes the penalty parameter.
Hence, we are able to obtain a rank-one matrix when the DC penalty term is enforced to be zero.
Then, the feasible aggregation beamforming vector $\bm{m}$ of problem $\mathscr{P}_1$ can be recovered by utilizing Cholesky decomposition for $\bm{M}^\star = \bm{m} \bm{m}^\mathsf{H}$.
Similarly, we detect the feasibility of problem $\mathscr{P}_{1, 2}$ by minimizing the DC representation term that is given by
\begin{align}\label{Prob12DC}
\mathscr{P}'_{1, 2}:
\underset{\bm{V}}{\text{minimize}} \hspace{2mm} & \mathrm{tr}(\bm{V}) - \|\bm{V}\|_2 \notag \\
\text{subject to} \hspace{2mm} & \|\bm{m}\|_2 - \gamma \left(\mathrm{tr}(\bm{R}_i \bm{V}) + |c_i|^2\right) \le x_i, \, \forall \, i \in \mathcal{K}, \notag \\
& \bm{V}_{n, n} = 1, \, \forall \, n \in \{1, \dots, N + 1\}, \notag \\
& \bm{V} \succeq \bm{0}.
\end{align}
Once the objective value becomes zero, we can obtain an exact rank-one feasible solution and extract $\bar{\bm{v}} = [\bm{v}_0, t_0]^\mathsf{T}$ by utilizing Cholesky decomposition for $\bm{V}^\star = \bar{\bm{v}} \bar{\bm{v}}^\mathsf{H}$.
Then, by computing $\bm{v} = \bm{v}_0 / t_0$, the phase-shift matrix can be recovered according to $\bm{\Theta} = \mathrm{diag}(\bm{v})$.

Problems $\mathscr{P}_{2, 1}$ and $\mathscr{P}_{2, 2}$ in the second step can be reformulated in the similar DC formulation to guarantee the feasibility of the rank-one constraint, which are rewritten as
\begin{align}
{\mathscr{P}'_{2, 1}}:
\underset{\bm{M}}{\text{minimize}} \hspace{3mm} & \mathrm{tr}(\bm{M}) + \rho \left(\mathrm{tr}(\bm{M}) - \|\bm{M}\|_2\right) \notag \\
\text{subject to} \hspace{3mm} & \mathrm{tr}(\bm{M} \bm{H}_i) \ge 1, \, \forall \, i \in \mathcal{S}^{[k]}, \notag \\
& \bm{M} \succeq \bm{0},
\end{align}
and
\begin{align}
{\mathscr{P}'_{2, 2}}:
\underset{\bm{V}}{\text{minimize}} \hspace{3mm} & \mathrm{tr}(\bm{V}) - \|\bm{V}\|_2 \notag \\
\text{subject to} \hspace{3mm} & \mathrm{tr}(\bm{R}_i \bm{V}) + |c_i|^2 \ge 1, \, \forall \, i \in \mathcal{S}^{[k]}, \notag \\
& \bm{V}_{n, n} = 1, \, \forall \, n \in \{1, \dots, N + 1\}, \notag \\
& \bm{V} \succeq \bm{0}.
\end{align}

\subsection{DC Algorithm}

Although the DC programs are still nonconvex, their problem structures of minimizing the difference of two convex functions can be exploited to develop an efficient DC algorithm \cite{DCProg} by successively linearizing the concave part.

Specifically, in the first step, the objective functions of problem ${\mathscr{P}'_{1, 1}}$ and problem ${\mathscr{P}'_{1, 2}}$ can be denoted as $g_1 - h_1$ and $g_2 - h_2$, respectively, where
\begin{align}
& g_1 =\|\bm{x}\|_1 + \rho \cdot \mathrm{tr}(\bm{M}), \quad h_1 = \rho \cdot \|\bm{M}\|_2, \\
& g_2 = \mathrm{tr}(\bm{V}), \quad h_2 = \|\bm{V}\|_2.
\end{align}
For problem ${\mathscr{P}'_{1, 1}}$, by linearizing the concave term $-h_1$ in the objective function, the resulting subproblem at the $t$-th iteration is given by
\begin{align}\label{ProbDCAlg1}
\underset{\bm{x} \in \mathbb{R}_+^K, \bm{M}}{\text{minimize}} \hspace{3mm} & g_1 - \left<\partial_{\bm{M}^{[t - 1]}} h_1, \bm{M}\right> \notag \\
\text{subject to} \hspace{3mm} & \mathrm{tr}(\bm{M}) - \gamma \cdot \mathrm{tr}(\bm{M} \bm{H}_i) \le x_i, \, \forall \, i \in \mathcal{K}, \notag \\
& \mathrm{tr}(\bm{M}) \ge 1, \, \bm{M} \succeq \bm{0},
\end{align}
where $\left<\bm{X}, \bm{Y}\right> = \Re[\mathrm{tr}(\bm{X}^\mathsf{H} \bm{Y})]$ is the inner product of two matrices, and $\partial_{\bm{X}^{[t-1]}}h$ denotes the subgradient of function $h$ with respect to $\bm{X}$ obtained at iteration $t - 1$.
Besides, problem ${\mathscr{P}'_{1, 2}}$ can be solved by iteratively solving
\begin{align}\label{ProbDCAlg2}
\underset{\bm{V}}{\text{minimize}} \hspace{3mm} & g_2 - \left<\partial_{\bm{V}^{[t - 1]}} h_2, \bm{V}\right> \notag \\
\text{subject to} \hspace{3mm} & \|\bm{m}\|^2 - \gamma \left(\mathrm{tr}(\bm{R}_i \bm{V}) + |c_i|^2\right) \le x_i, \, \forall \, i \in \mathcal{K}, \notag \\
& \bm{V}_{n, n} = 1, \, \forall \, n \in \{1, \dots, N + 1\}, \, \bm{V} \succeq \bm{0}.
\end{align}

Likewise, in the second step, we can also transform the DC programs $\mathscr{P}'_{2, 1}$ and $\mathscr{P}'_{2, 2}$ into such a series of subproblems to apply the DC algorithm. In particular, we denote the objective functions of problem $\mathscr{P}'_{2, 1}$ and problem $\mathscr{P}'_{2, 2}$ as $g_3 - h_3$ and $g_4 - h_4$, respectively, where
\begin{align}
& g_3 = (1 + \rho) \cdot \mathrm{tr}(\bm{M}), \quad h_3 = \rho \cdot \|\bm{M}\|_2, \\
& g_4 = \mathrm{tr}(\bm{V}), \quad h_4 = \|\bm{V}\|_2.
\end{align}
The solution $\bm{M}^{[t]}$ for $\mathscr{P}'_{2, 1}$ is obtained by solving
\begin{align}\label{ProbDCAlg3}
	\underset{\bm{M}}{\text{minimize}} \hspace{3mm} & g_3 - \left<\partial_{\bm{M}^{[t - 1]}} h_3, \bm{M}\right> \notag \\
	\text{subject to} \hspace{3mm} & \mathrm{tr}(\bm{M} \bm{H_i}) \ge 1, \, \forall \, i \in \mathcal{S}^{[k]}, \notag \\
	& \bm{M} \succeq \bm{0}.
\end{align}
Besides, the solution $\bm{V}^{[t]}$ for $\mathscr{P}'_{2, 2}$ is obtained by solving
\begin{align}\label{ProbDCAlg4}
\underset{\bm{V}}{\text{minimize}} \hspace{3mm} & g_4 - \left<\partial_{\bm{V}^{[t - 1]}} h_4, \bm{V}\right> \notag \\
\text{subject to} \hspace{3mm} & \mathrm{tr}(\bm{R}_i \bm{V}) + |c_i|^2 \ge 1, \, \forall \, i \in \mathcal{S}^{[k]}, \notag \\
& \bm{V}_{n, n} = 1, \, \forall \, n \in \{1, \dots, N + 1\}, \notag \\
& \bm{V} \succeq \bm{0}.
\end{align}

Therefore, we have $\partial_{\bm{M}} h_1 = \partial_{\bm{M}} h_3 = \rho \cdot \partial \|\bm{M}\|_2$ and $\partial_{\bm{V}} h_2 = \partial_{\bm{V}} h_4 = \partial \|\bm{V}\|_2$, where $\partial \|\bm{X}\|_2$ can be efficiently computed by $\bm{u}_1 \bm{u}_1^\mathsf{H}$ \cite{yang2020federated} and $\bm{u}_1$ is the eigenvector corresponding to the largest eigenvalue of the matrix $\bm{X}$.
Consequently, it can be verified that the above subproblems are convex and thus can be efficiently solved by using CVX \cite{cvx}.
Furthermore, it has been shown in \cite{DCProg} that the solving procedure with the DC algorithm always converges to the critical points of the DC programs from any feasible initial points. 

	

\subsection{Computation Complexity Analysis}

In our proposed Algorithm \ref{alg_two_step_alternating_dc}, we need to solve a sequence of SDP problems \eqref{ProbDCAlg1}, \eqref{ProbDCAlg2} in the first step, and \eqref{ProbDCAlg3}, \eqref{ProbDCAlg4} in the second step. 
To solve each SDP problem, the worst case computational complexity by using the second-order interior point method \cite{MatrixLift} is $\mathcal{O}((M^2 + K)^{3.5})$ in problems \eqref{ProbDCAlg1} and \eqref{ProbDCAlg3}, and is $\mathcal{O}((N^2 + K)^{3.5})$ in problems \eqref{ProbDCAlg2} and \eqref{ProbDCAlg4}.
Supposing those problems converging to critical points of the DC programs with $T > 1$ iterations, the computational cost of solving a DC program, i.e., one of problems $\mathscr{P}'_{1, 1}$, $\mathscr{P}'_{1, 2}$, $\mathscr{P}'_{2, 1}$, and $\mathscr{P}'_{2, 2}$, is $\mathcal{O}(T (M^2 + K)^{3.5})$ or $\mathcal{O}(T (M^2 + K)^{3.5})$.
Note that we merely need to solve the SDP problem once for problems $\mathscr{P}_{1, 1}$, $\mathscr{P}_{1, 2}$, $\mathscr{P}_{2, 1}$, and $\mathscr{P}_{2, 2}$ via SDR technique with simply dropping the rank-one constraints, i.e., $T = 1$ in this case.
The proposed DC algorithm has a higher computational complexity than the SDR method.
Nevertheless, the sacrifice of the computational complexity results in significant improvement on the system performance, which will be demonstrated in the following section.

\section{Simulation Results} \label{SecSimulation}

In this section, we present the simulation results to demonstrate the advantages of the proposed two-step alternating DC algorithm for FL with device selection.
The effectiveness of deploying an IRS for the AirComp-based FL system will also be illustrated.
We consider a three-dimensional coordinate system, where the antennas at the BS and the reflecting elements at the IRS are placed as a uniform linear array and a uniform rectangular array, respectively.
The locations of the BS and the IRS are, respectively, set as $(3,0,6)$ meters and $(0,100,6)$ meters, while the edge devices are distributed in the region of $([0, 6], [100, 106], 0)$ meters surrounding the IRS.
The path loss model is given by
\begin{align}
	L(d) = {C_0}{(d/{d_0})^{-\alpha}},
\end{align}
where $C_0$ denotes the path loss at the reference distance $d_0 = 1$ meter, $d$ is the link distance, and $\alpha$ is the path loss exponent.
All channels are assumed to suffer from Rician fading \cite{wu2019intelligent}, where the channel coefficient can be expressed as
\begin{align}
	\bm{\varrho} = \sqrt{\frac{\varsigma}{1 + \varsigma}} \bm{\varrho}_{\rm LoS} + \sqrt{\frac{1}{1 + \varsigma}} \bm{\varrho}_{\rm NLoS},
\end{align}
where $\varsigma$ is the Rician factor, $\bm{\varrho}_{\rm LoS}$ denotes the line-of-sight (LoS) component, and $\bm{\varrho}_{\rm NLoS}$ denotes the non-line-of-sight (NLoS) component.
In our simulations, the channel coefficients are given by $\bm{G} = \sqrt{L(d_{\rm BI})} \bm{\varrho}_{\rm BI}$, $\bm{h}_i^\mathrm{r} = \sqrt{L(d_{{\rm ID}, i})} \bm{\varrho}_{{\rm ID}, i}$, and $\bm{h}_i^\mathrm{d} = \sqrt{L(d_{{\rm BD}, i})} \bm{\varrho}_{{\rm BD}, i}$,
where $d_{\rm BI}$, $d_{{\rm ID}, i}$ and $d_{{\rm BD}, i}$ denote the distance between BS and IRS, the distance between IRS and device $i$, and the distance between BS and device $i$, respectively.
As in \cite{wu2020discrete}, the Rician factors of $\bm{\varrho}_{\rm IB}$, $\bm{\varrho}_{{\rm DI}, i}$, and $\bm{\varrho}_{{\rm DB}, i}$ are set to be $3$ dB, $0$, and $0$, respectively, and the path loss exponents for the BS-device channel, the BS-IRS channel, and the IRS-device channel are set to be 3.6, 2.2, and 2.8, respectively.
Unless stated otherwise, other parameters are set as follows: $C_0 = -30$ dB, $P_0 = 20$ dBm, $\sigma^2 = -90$ dBm, $\epsilon = 10^{-3}$, $K = 20$, $M = 20$, and $N = 64$.

To ensure the effectiveness of our proposed two-step alternating DC algorithm for device selection, we first show the convergence behaviours of the sparse inducing step and the feasibility detection step in Fig.~\ref{fig_convergence_step1} and Fig.~\ref{fig_convergence_step2}, respectively. It is observed that the objective values of problems $\mathscr{P}_1$ and $\mathscr{P}_2$ are both able to converge to the stationary points by accurately finding rank-one solutions with DC programming. 

\begin{figure}[t]
	\centering
	\includegraphics[scale=0.55]{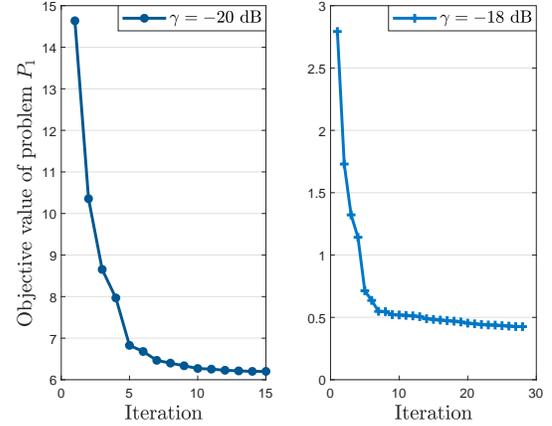}
	\caption{Convergence behaviour of sparse inducing step.}
	\label{fig_convergence_step1}
\end{figure}

\begin{figure}[t]
	\centering
	\includegraphics[scale=0.55]{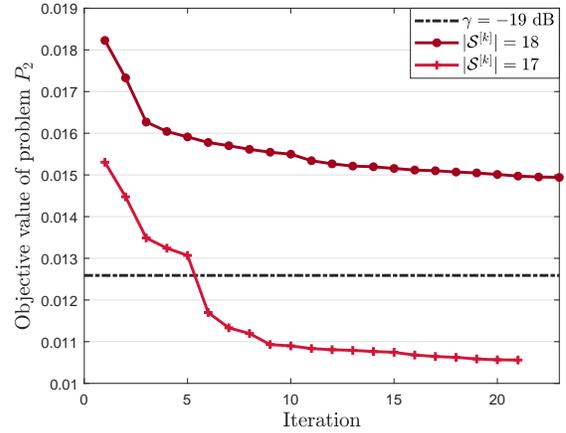}
	\caption{Convergence behaviour of feasibility detection step.}
	\label{fig_convergence_step2}
\end{figure}

Under the proposed two-step framework, we compare the proposed alternating DC based device selection algorithm (i.e., Algorithm \ref{alg_two_step_alternating_dc}) with the following baseline schemes:
\begin{itemize}
	\item \textbf{Alternating SDR with IRS}: In this scheme, the SDR method is applied to solve problems $\mathscr{P}_{1, 1}$, $\mathscr{P}_{1, 2}$, $\mathscr{P}_{2, 1}$, and $\mathscr{P}_{2, 2}$.
	
	\item \textbf{Random phase shifts}: In this scheme, the phase shift of each reflecting element at the IRS is uniformly and independently generated from $[0, 2 \pi)$. We merely solve problem $\mathscr{P}_{1, 1}$ in the first step and problem $\mathscr{P}_{2, 1}$ in the second step with the proposed DC algorithm.
	
	\item \textbf{Without IRS}: In the circumstance without IRS, only problem $\mathscr{P}_{1, 1}$ in the first step and problem $\mathscr{P}_{2, 1}$ in the second step need to be solved with the proposed DC algorithm by setting $\bm{\Theta} = \bm{0}$.
\end{itemize}

\subsection{Device Selection} \label{SecSim_DevSel}

Fig.~\ref{fig_mse} shows the average number of selected devices under different schemes versus the MSE threshold $\gamma$ for FL systems with and without IRS.
As the MSE threshold $\gamma$ increases, the average number of selected devices becomes larger. This is because reducing the requirement for the aggregation error is capable of inducing more edge devices to participate in the training process of FL.
In contrast to the scenario without IRS, deploying an IRS in the FL system can support much more devices for concurrent model aggregation under a certain MSE requirement. 
Besides, the scheme with random phase shifts performs worse than both the alternating DC and alternating SDR methods, which demonstrates the importance of jointly optimizing the device selection, the aggregation beamformer at the BS, and the phase shifts at the IRS.
Moreover, due to the effectiveness of obtaining the rank-one solutions with the DC algorithm, our proposed DC-based method significantly outperforms the SDR method.

\begin{figure}[t]
	\centering
	\includegraphics[scale=0.55]{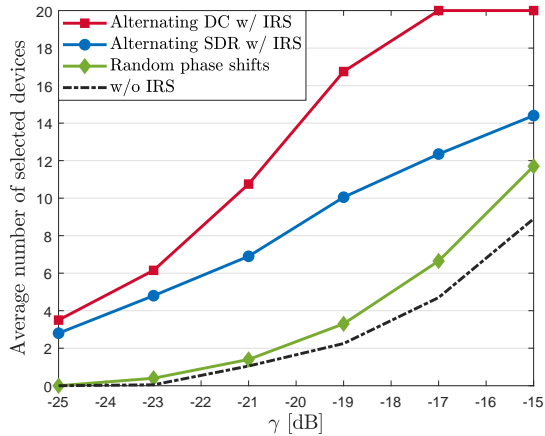}
	\caption{Average number of selected devices versus the MSE threshold.}
	\label{fig_mse}
\end{figure}

Fig.~\ref{fig_elements} illustrates the impact of the number of reflecting elements at the IRS on the average number of selected devices when $\gamma = -20$ dB.
As the number of reflecting elements increases, the IRS generates more accurate passive reflective beamforming for the incident signals, thereby effectively reducing the aggregation error at the BS. Therefore, the system is capable of selecting more edge devices to participate in FL, while satisfying the MSE requirement. In addition, since the SDR method has a high probability of failing to return rank-one solutions for high-dimensional optimization problems, it is observed that the gap between the DC and SDR schemes increases as the number of reflecting elements at the IRS becomes larger.

\begin{figure}[t]
	\centering
	\includegraphics[scale=0.55]{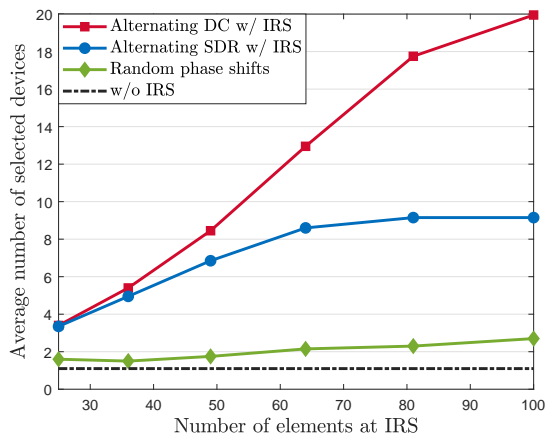}
	\caption{Average number of selected devices versus the number of reflecting elements at IRS.}
	\label{fig_elements}
\end{figure}

Fig.~\ref{fig_antennas} shows the impact of the number of antennas at the BS on the average number of selected devices when $\gamma = -22$ dB.
As the number of antennas at the BS increases, the channel gain between the BS and each edge device is enhanced by gathering signals from more antennas. Therefore, the adverse impact of additive noise at the BS can be alleviated and in turn the aggregation error is reduced, thereby being able to schedule more edge devices to participate in FL under a certain MSE requirement.
In addition, it is observed that even when the number of antennas at the BS is doubled, it is still difficult for the system without IRS to achieve a similar performance to the scenario with an IRS by jointly optimizing the aggregation beamformer at the BS and the phase shifts at the IRS. This observation implies that deploying an IRS not only enhances the system performance but also reduces the hardware complexity at the BS. Therefore, it is an efficient way to achieve fast and reliable model aggregation from the edge devices under a certain MSE requirement by deploying an IRS in AirComp-based FL system.

\begin{figure}[t]
	\centering
	\includegraphics[scale=0.55]{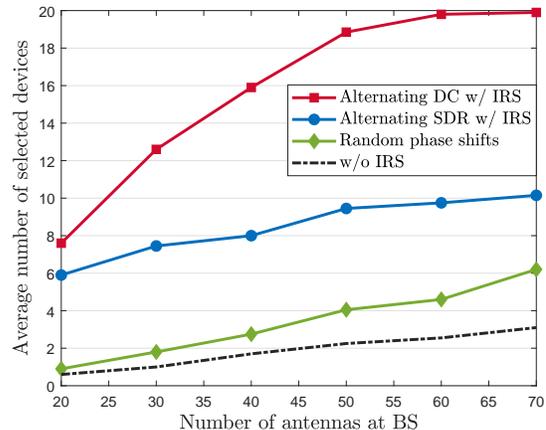}
	\caption{Average number of selected devices versus the number of antennas at BS.}
	\label{fig_antennas}
\end{figure}

\subsection{Performance Comparison for Federated Learning}

\begin{table*}[t]
	\centering
	\caption{Examples of handwritten digit identification with different schemes}
	\label{tab_model_test}
	\begin{tabular}{|c|c|c|c|c|c|c|c|}
		\hline
		\multirow{3}{*}{\textbf{Data}} & & & & & & & \\
		& \includegraphics[scale=0.14]{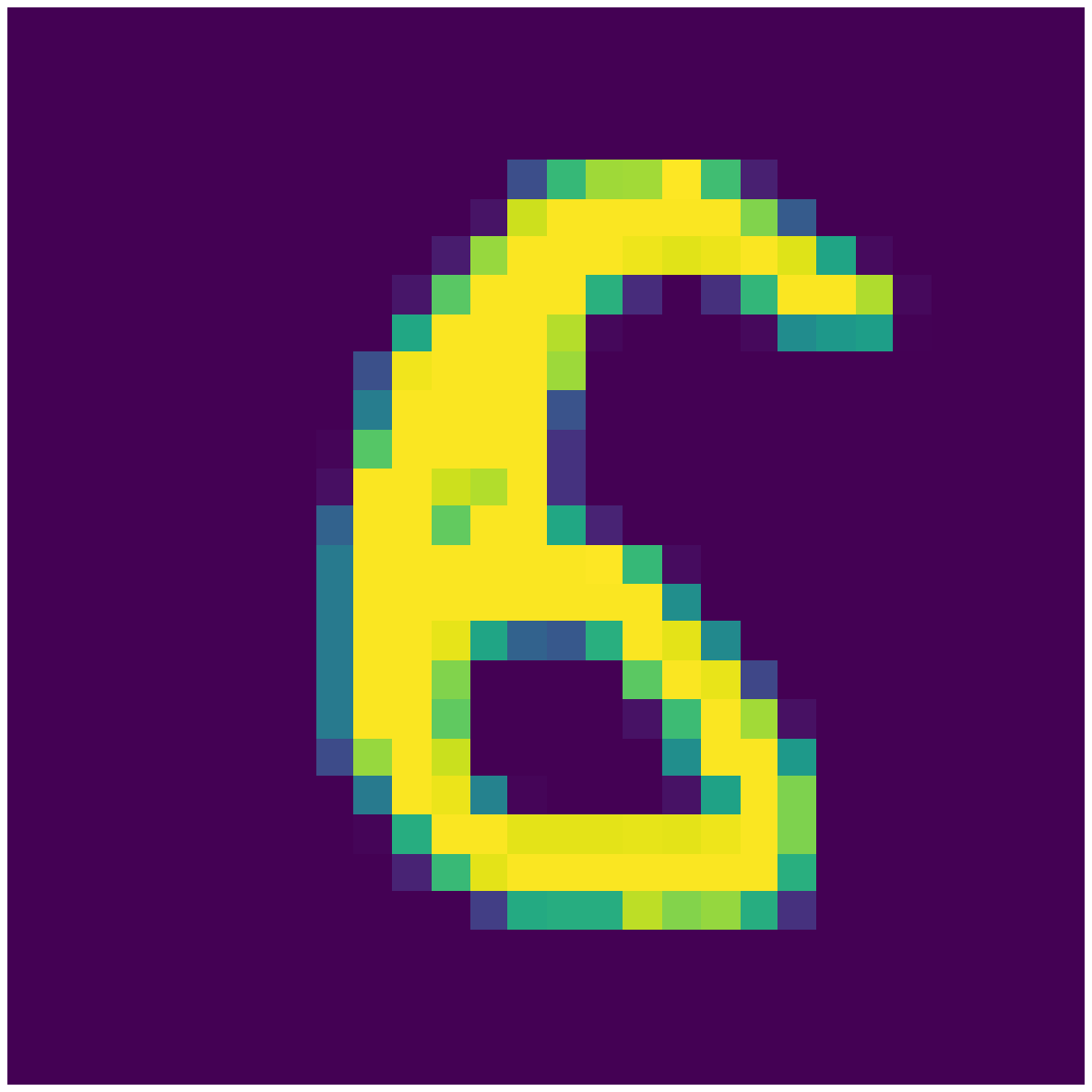}
		& \includegraphics[scale=0.14]{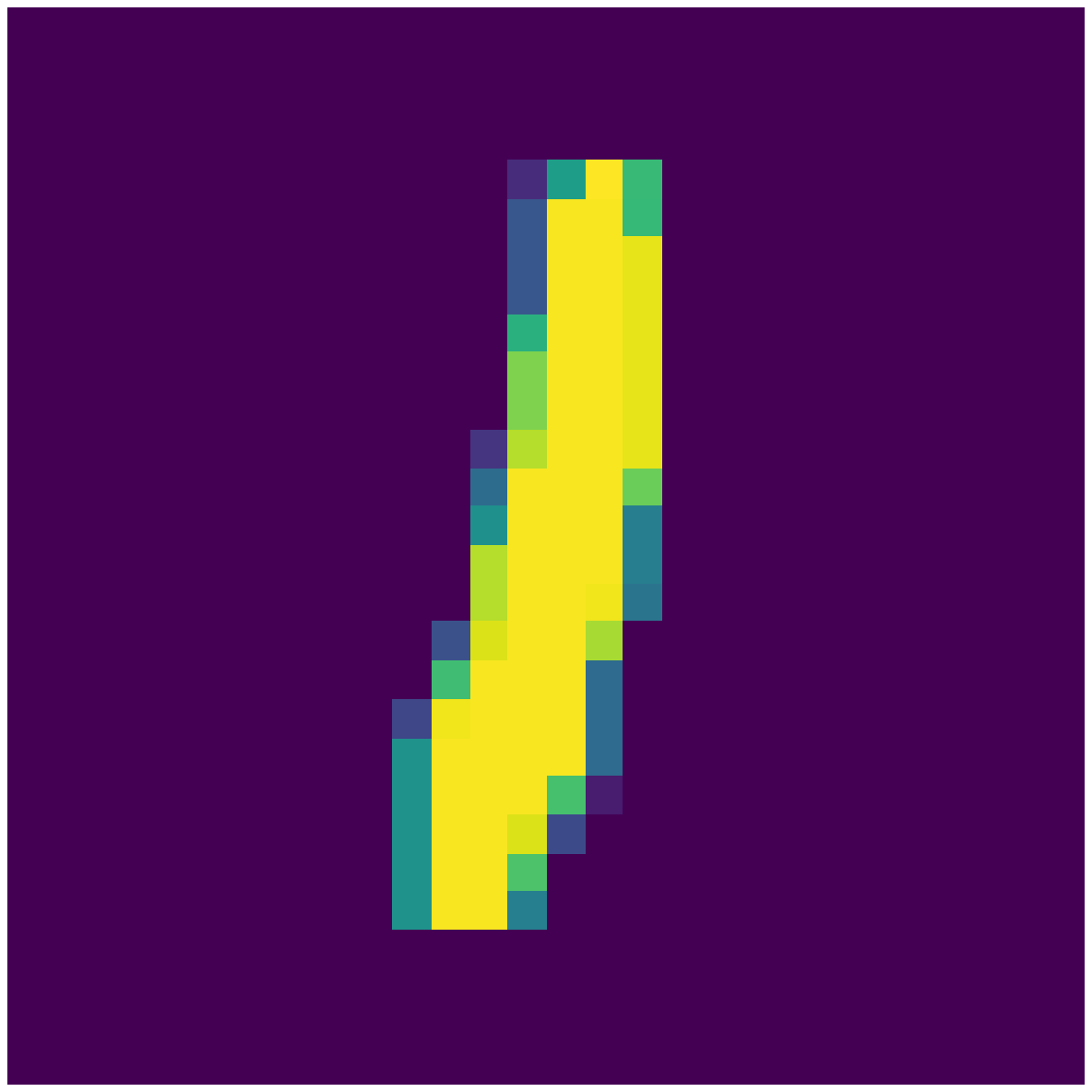}
		& \includegraphics[scale=0.14]{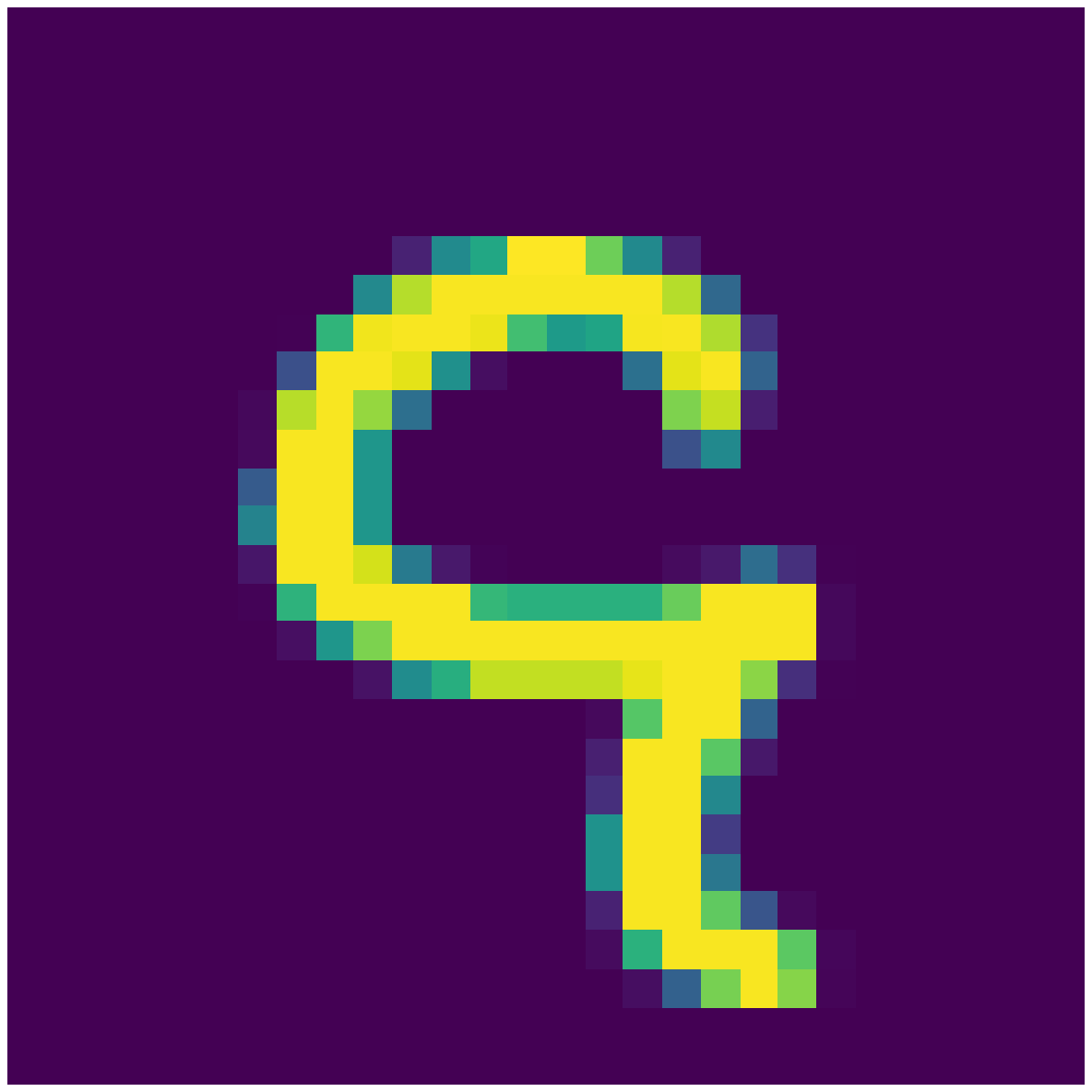}
		& \includegraphics[scale=0.14]{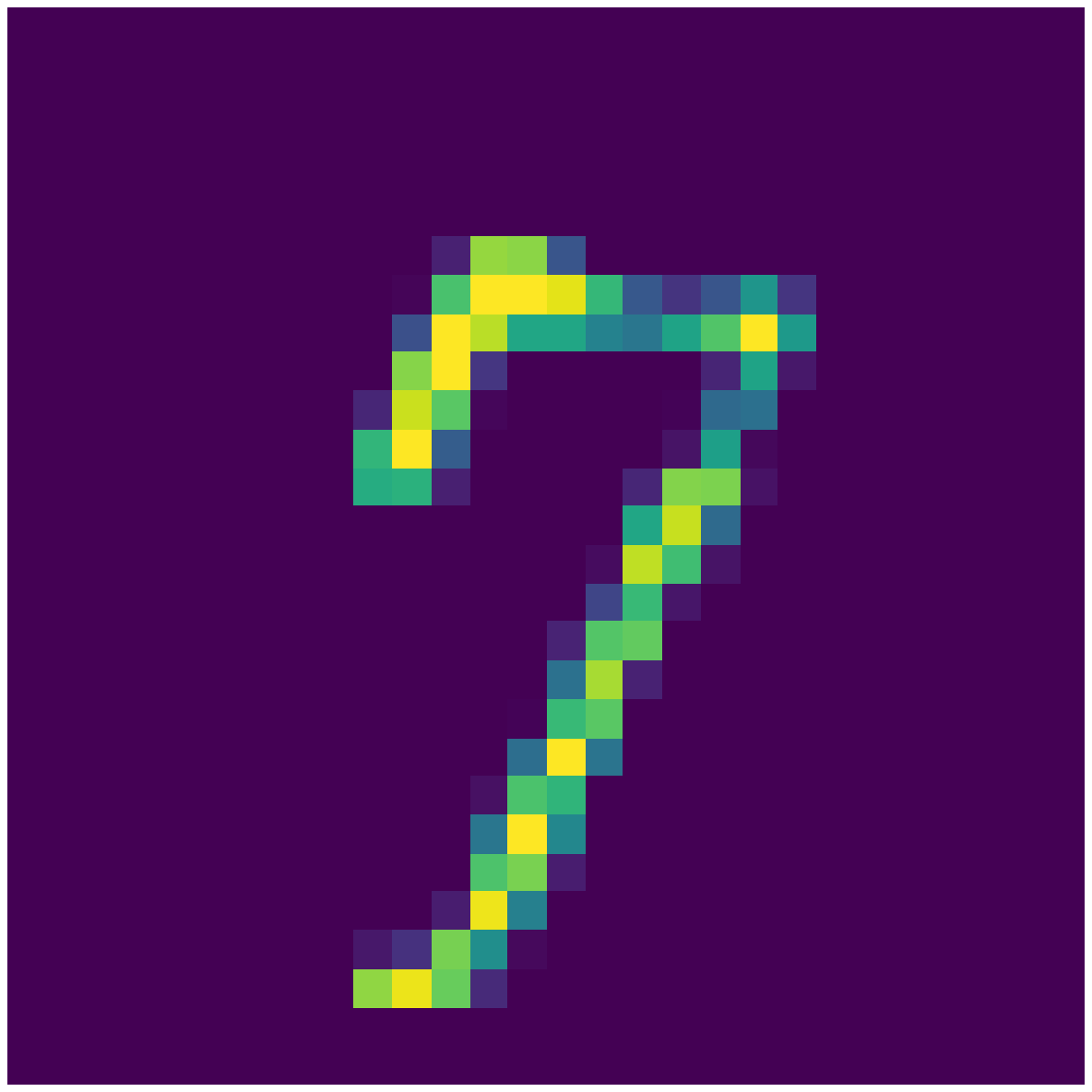}
		& \includegraphics[scale=0.14]{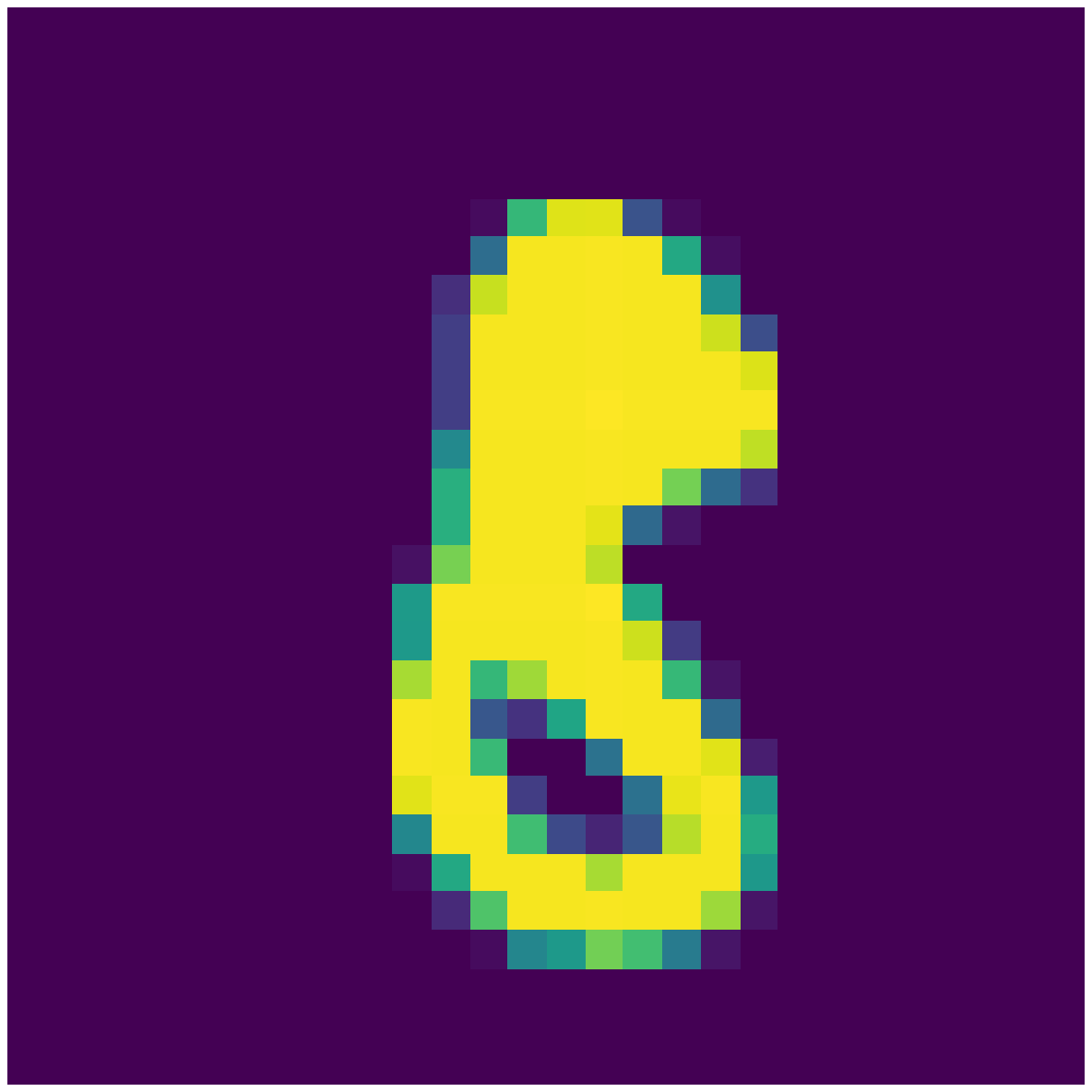}
		& \includegraphics[scale=0.14]{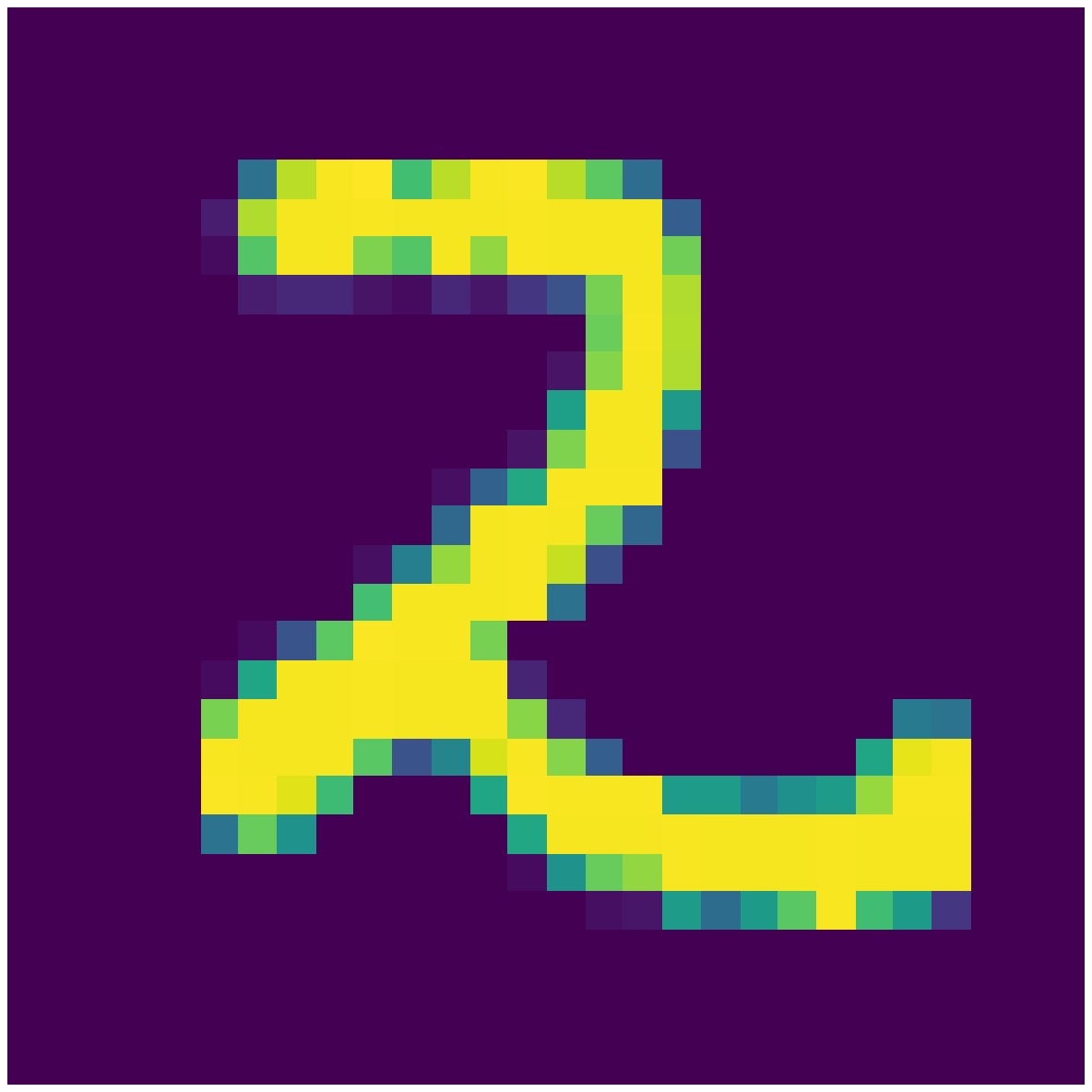}
		& \includegraphics[scale=0.14]{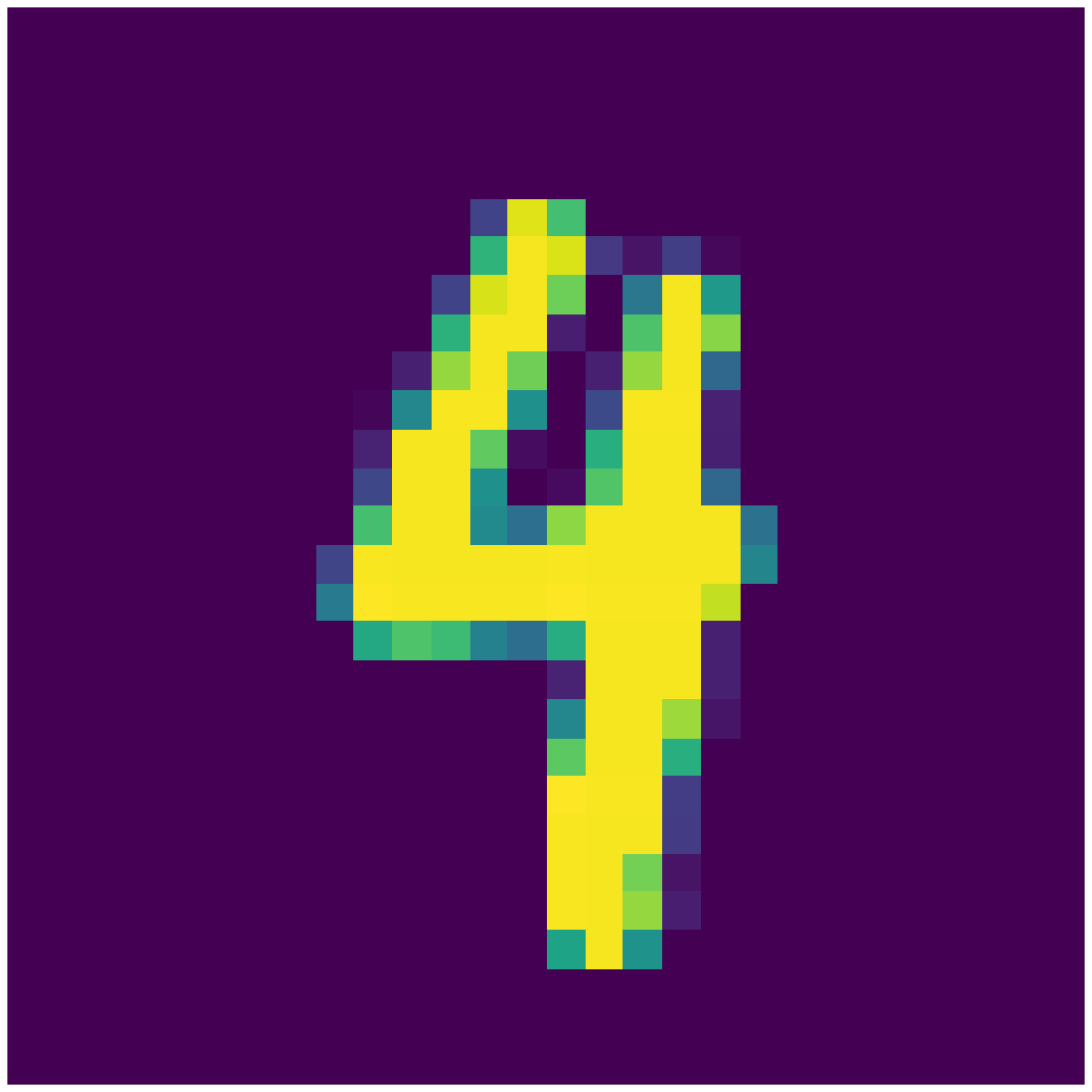} \\
		& & & & & & & \\
		\hline
		\multirow{1}{*}{\textbf{Label}} & 6 & 1 & 9 & 7 & 8 & 2 & 4 \\
		\hline
		\textbf{FL w/ IRS} & 6 & 1 & 9 & 7 & 8 & 2 & 4 \\
		\hline
		\textbf{FL w/o IRS} & {\color{red} 5 ($\times$)} & 1 & {\color{red} 4 ($\times$)} & {\color{red} 9 ($\times$)} & {\color{red} 5 ($\times$)} & 2 & 4 \\
		\hline
		\hline
		\multirow{3}{*}{\textbf{Data}} & & & & & & & \\
		& \includegraphics[scale=0.13]{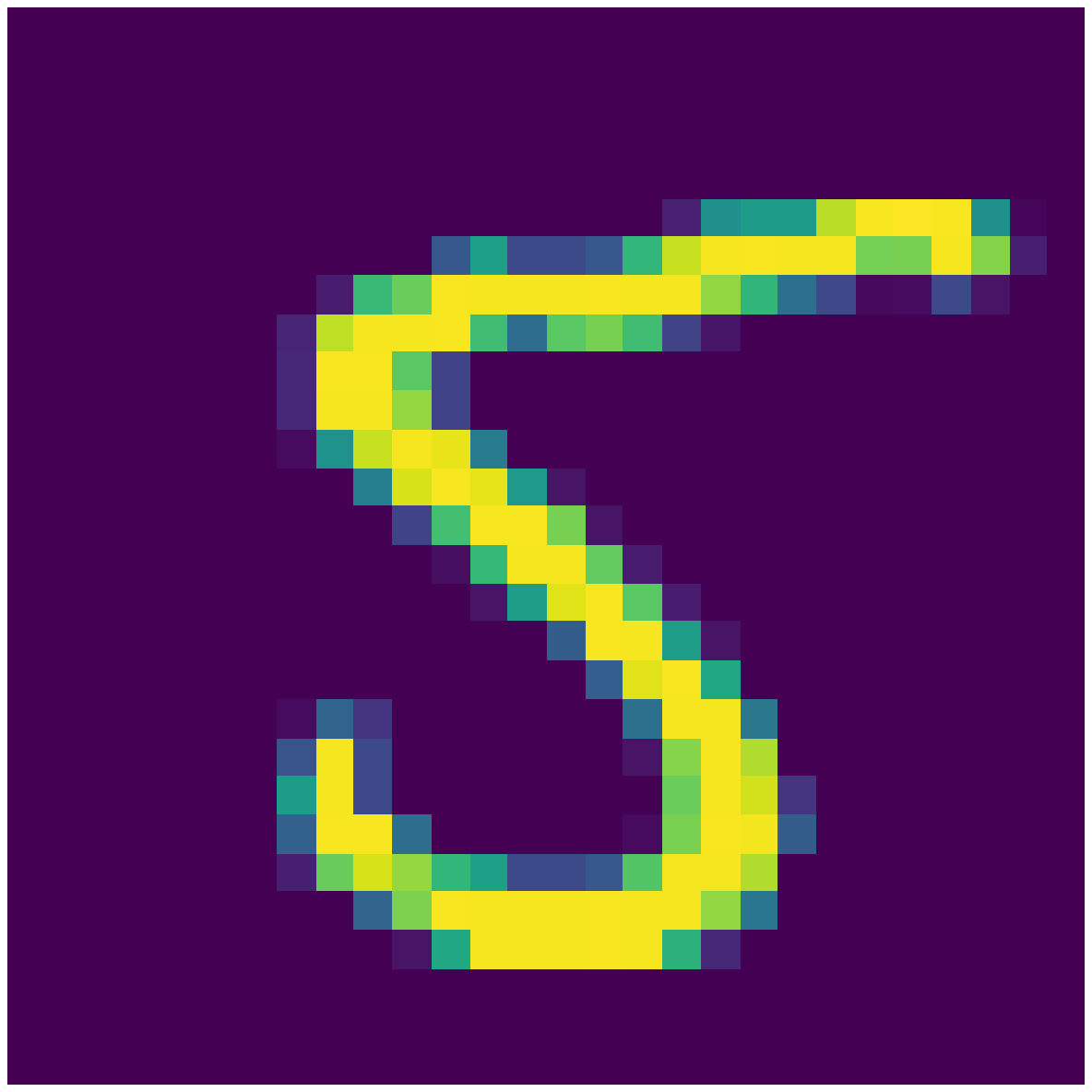}  
		& \includegraphics[scale=0.13]{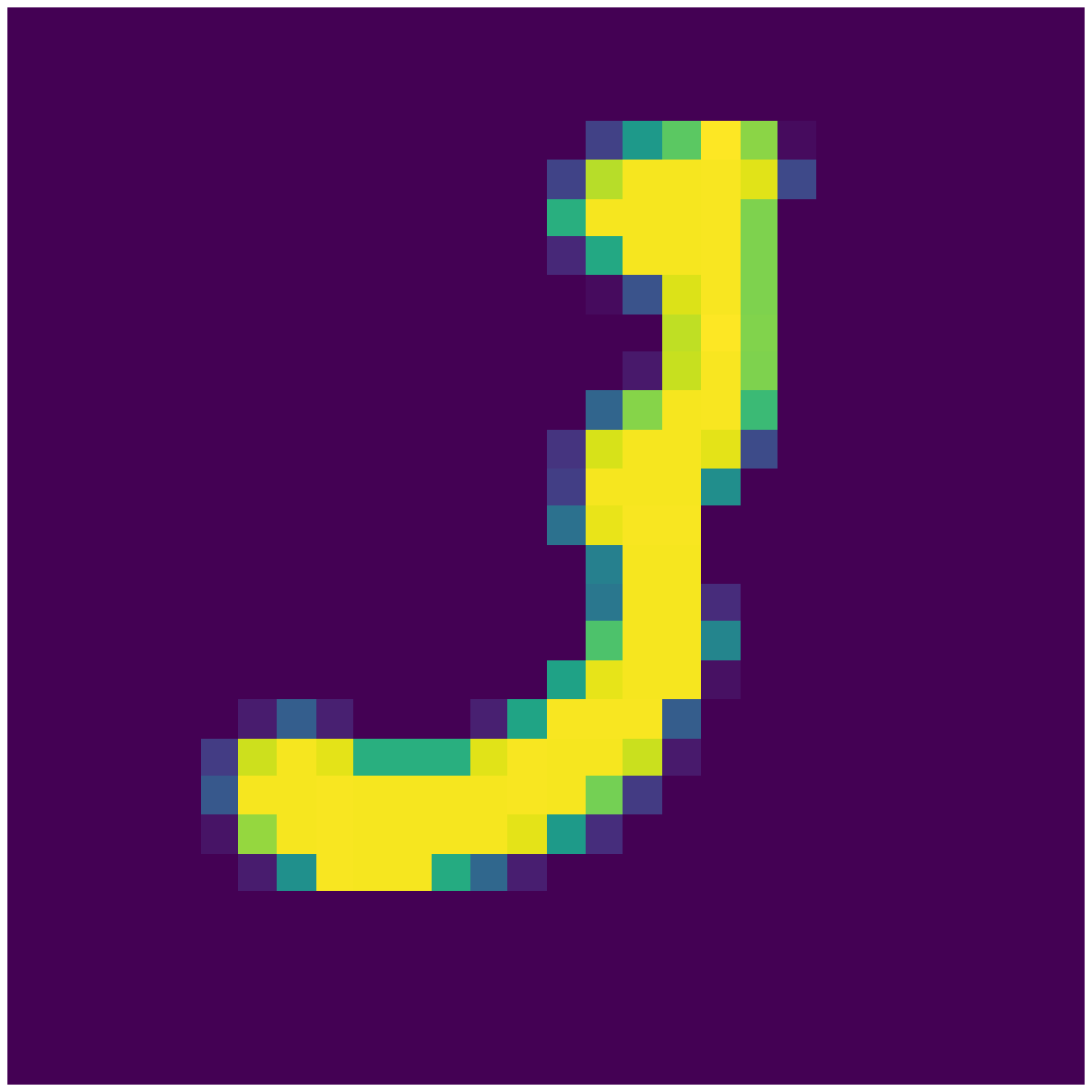}
		& \includegraphics[scale=0.13]{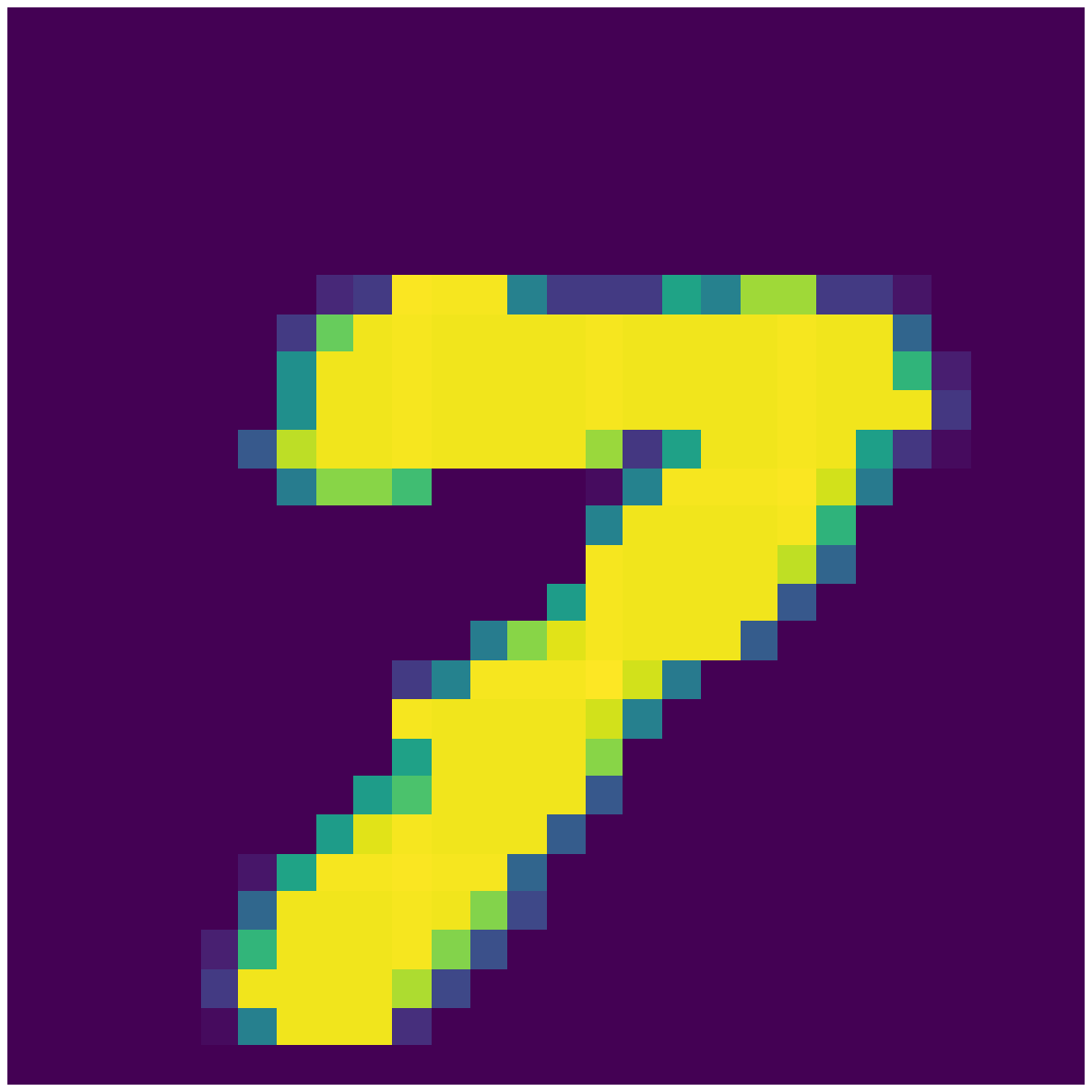}
		& \includegraphics[scale=0.13]{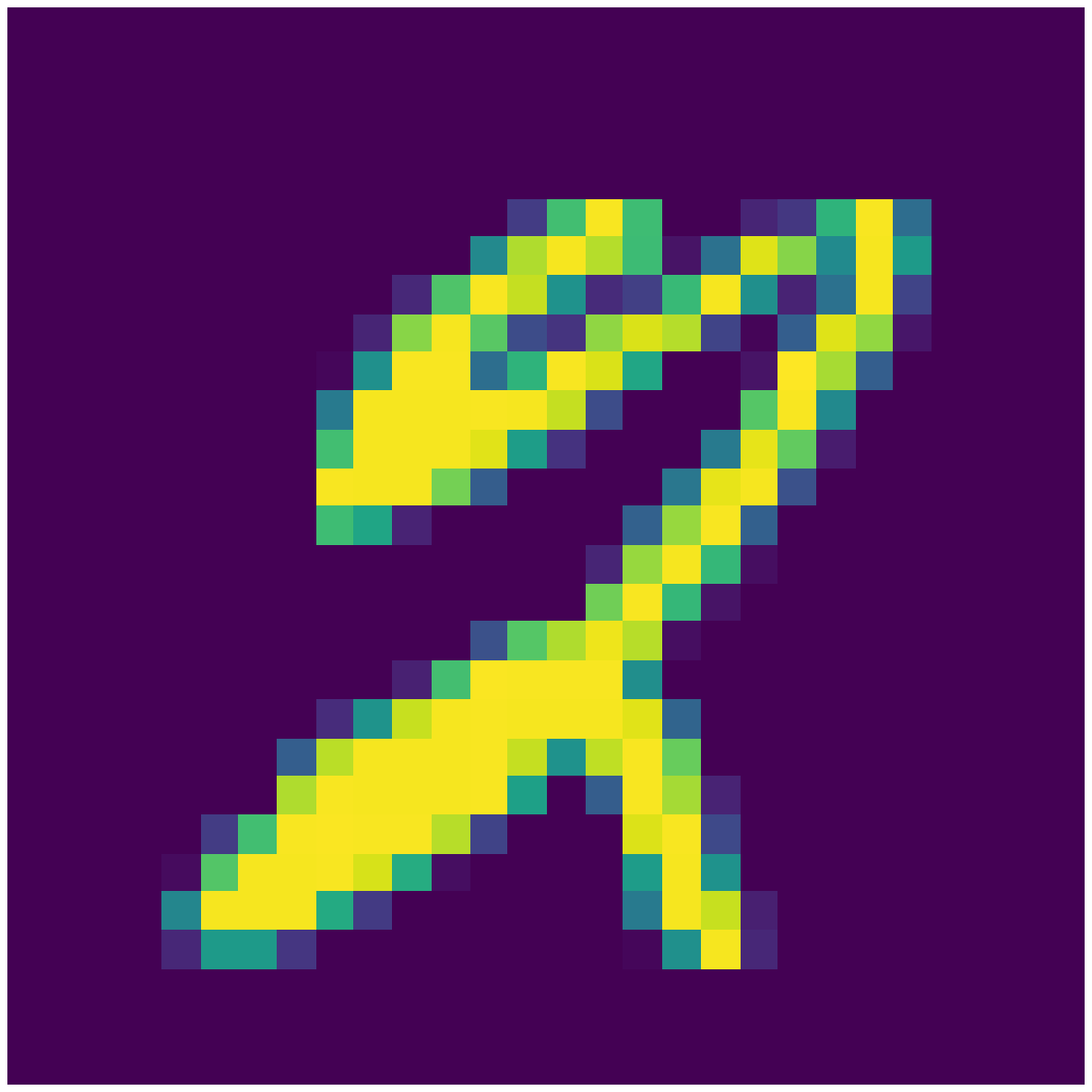}
		& \includegraphics[scale=0.13]{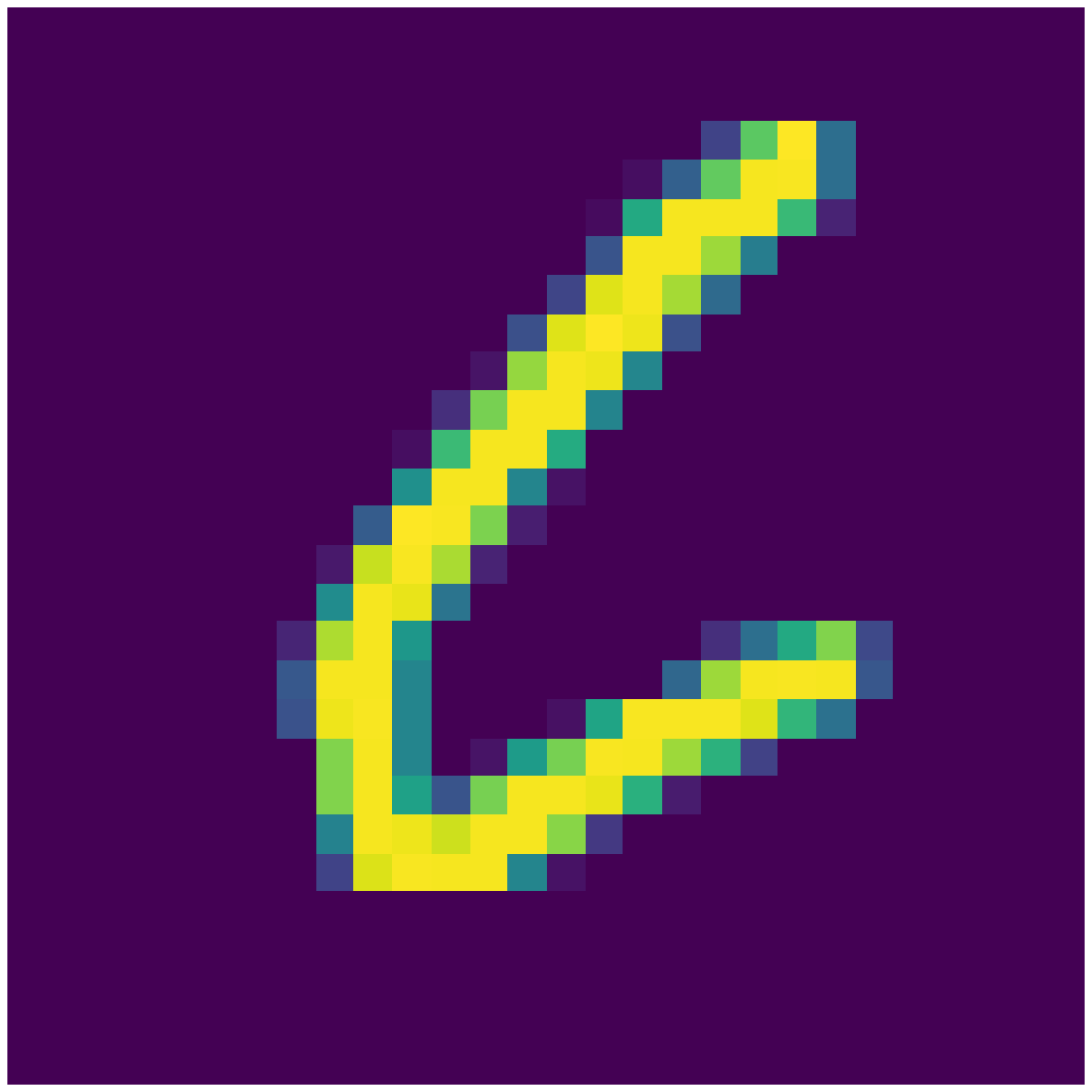}
		& \includegraphics[scale=0.13]{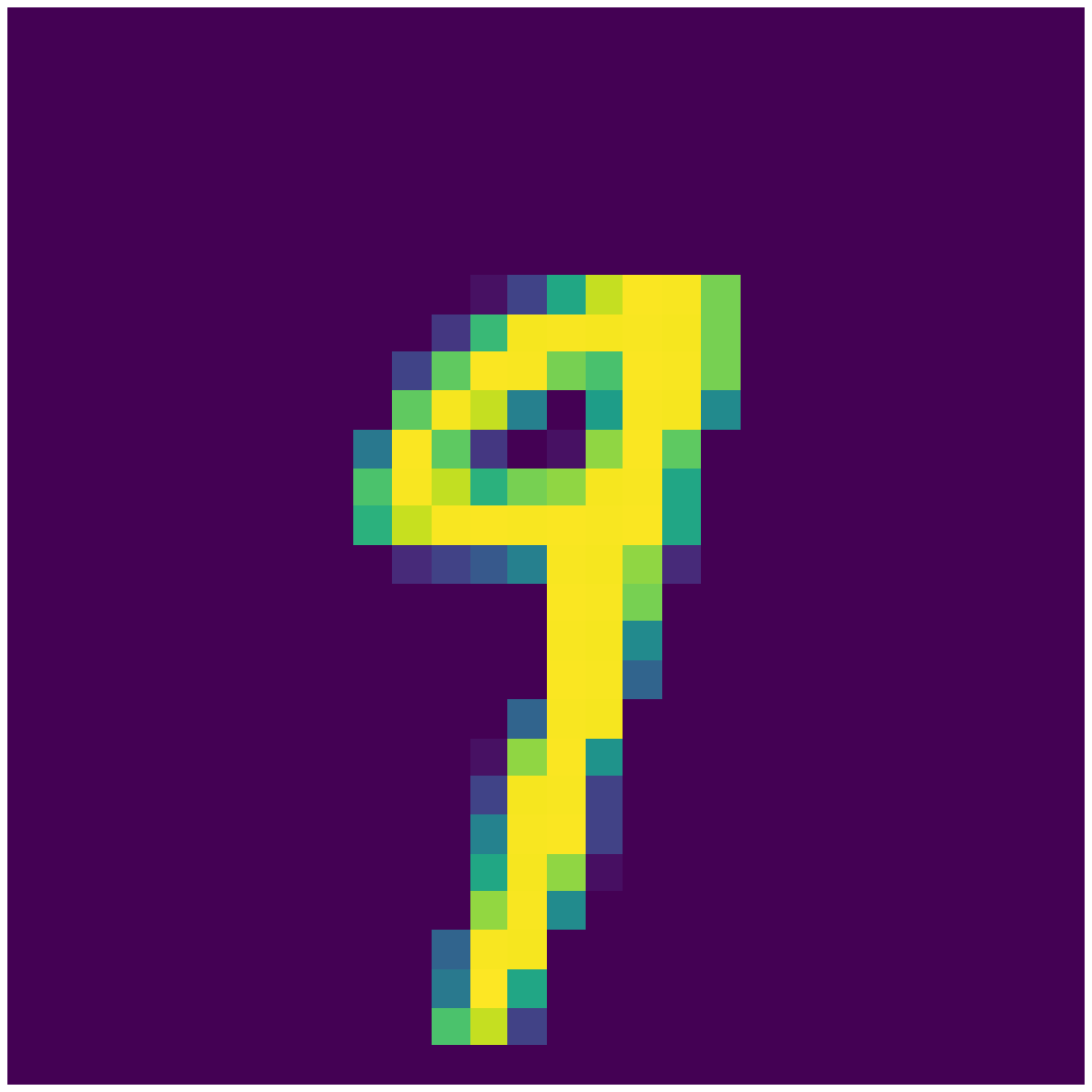}
		& \includegraphics[scale=0.14]{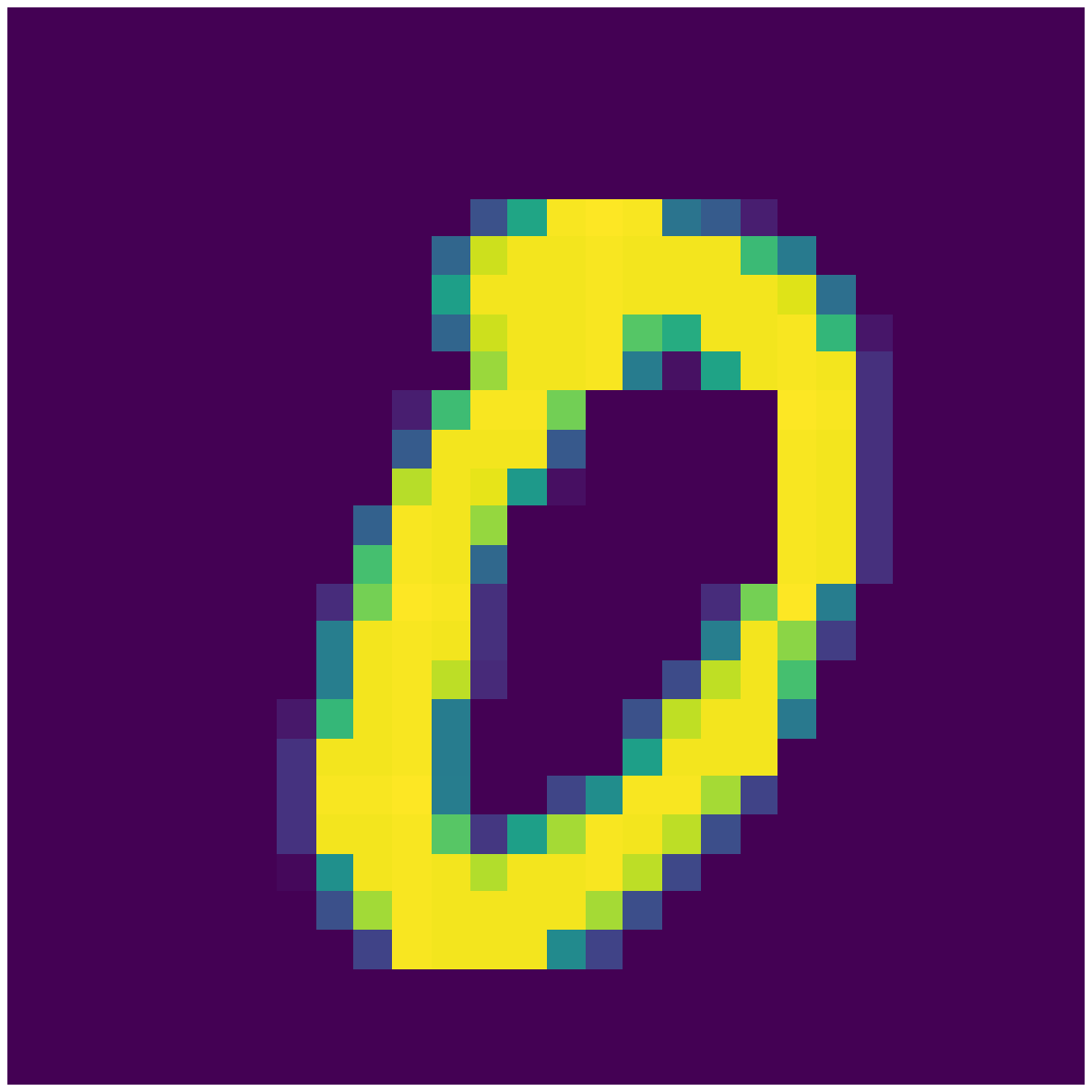} \\
		& & & & & & & \\
		\hline
		\multirow{1}{*}{\textbf{Label}} & 5 & 3 & 7 & 2 & 6 & 9 & 0 \\
		\hline
		\textbf{FL w/ IRS} & 5 & {\color{red} 2 ($\times$)} & 7 & 2 & 6 & {\color{red} 7 ($\times$)} & 0 \\
		\hline
		\textbf{FL w/o IRS} & 5 & {\color{red} 2 ($\times$)} & 7 & {\color{red} 4 ($\times$)} & {\color{red} 5 ($\times$)} & {\color{red} 4 ($\times$)} & 0 \\
		\hline
	\end{tabular}
\end{table*}

To directly show the excellent performance of the proposed two-step alternating DC algorithm for dealing with the FL tasks, we train a deep CNN on the widely used MNIST dataset.
In the simulations, the MNIST dataset consists of 10 classes with 6000 handwritten digits per class.
The case that all devices are selected at each communication round and without any aggregation error serves as the \textbf{Benchmark}.
We also consider that the dataset locally owned by each edge device is non-independent and identically distributed (non-IID) with uniform size.
In addition, each edge device only has two randomly selected classes.

When $\gamma = -17$ dB, the training loss and test accuracy averaged over 10 realizations are illustrated in Fig.~\ref{fig_training_loss} and Fig.~\ref{fig_test_accuracy}, respectively.
The results indicate that the proposed two-step alternating DC algorithm achieves a desirable performance in terms of a lower training loss and higher test accuracy than other schemes due to more edge devices and richer datasets being selected to participate in FL in each communication round.
In addition, the proposed algorithm achieves almost the same performance as the Benchmark scheme, which is an ideal case and serves as the performance upper bound.
Furthermore, Table \ref{tab_model_test} shows an example of handwritten digit identification under the FL system with and without IRS.
It can be observed that the IRS-assisted AirComp-based FL system is capable of achieving more accurate prediction with the given dataset, which indicates that under a specific MSE requirement, admitting more edge devices to participate in FL and collaboratively train a global model can achieve a lower training loss and higher test accuracy in fewer communication rounds.

\begin{figure}[t]
	\centering
	\includegraphics[scale=0.55]{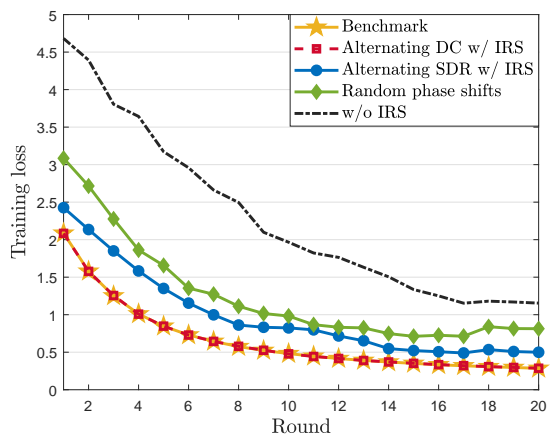}
	\caption{Training loss versus communication round.}
	\label{fig_training_loss}
\end{figure}

\begin{figure}[t]
	\centering
	\includegraphics[scale=0.55]{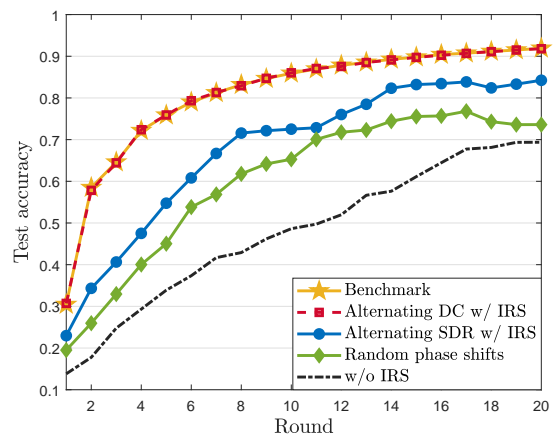}
	\caption{Test accuracy versus communication round.}
	\label{fig_test_accuracy}
\end{figure}


\section{Conclusions} \label{SecConclusion}

In this paper, we proposed a novel IRS-assisted AirComp approach for fast model aggregation in a FL system.
To accelerate the convergence and enhance the learning performance of FL, we developed a two-step alternating low-rank optimization framework to maximize the number of selected devices under the MSE requirement for model aggregation.
We presented a DC formulation for rank-one constrained problems in the alternating procedure, followed by proposing the DC algorithm for solving the resulting DC programs.
Simulation results demonstrated that our proposed algorithm can achieve a lower training loss and higher test accuracy by selecting more devices under certain MSE requirements compared with the baseline scheme without IRS.

\begin{appendices}

\section{Proof of Proposition \ref{transmitter}} \label{ap_transmitter}
The transmitter scalar $\{w_i\}$ in Eq.~\eqref{equMSE} has the zero-forcing structure to enforce
\begin{align}
\sum\limits_{i \in \mathcal{S}} {{\left| \frac{1}{\sqrt \eta }{\bm{m}^\mathsf{H}} (\bm{G\Theta h}_i^\mathrm{r} + \bm{h}_i^\mathrm{d}) {w_i} - 1\right|}^2} = 0.
\end{align}
Moreover, we have
$\mathsf{MSE}(\hat g,g) \geq \sigma^2 \left\|\bm{m} \right\|^2/\eta$ from Eq.~\eqref{equMSE}.
We thus obtain the form of zero-forcing transmitter scalar given in Proposition \ref{transmitter} which minimizes the MSE.

\section{Proof of Proposition \ref{QuadCons}} \label{ap_quadcons}
The constraint \eqref{consMSE} can be reformulated as $F_i(\bm{m}) = \|\bm{m}\|^2 - \gamma |\bm{m}^\mathsf{H} (\bm{G} \bm{\Theta} \bm{h}_i^\mathrm{r} + \bm{h}_i^\mathrm{d})|^2 \le 0, i \in \mathcal{S}$, where $\bm{m} \neq \bm{0}$.
We can further rewrite it as
$F_i(\bm{m} / \sqrt{\tau}) = F_i(\bm{m}) / \tau \le 0, i \in \mathcal{S}$, where $\|\bm{m}\|^2 \ge \tau$ and $\tau > 0$.
By introducing optimization variable $\tilde{\bm{m}} = \bm{m} / \sqrt{\tau}$, the constraint \eqref{consMSE} can be equivalently reformulated as
$\|\tilde{\bm{m}}\|^2 - \gamma |\tilde{\bm{m}}^\mathsf{H}(\bm{G} \bm{\Theta} \bm{h}_i^\mathrm{r} + \bm{h}_i^\mathrm{d})|^2 \le 0, i \in \mathcal{S}$,
where $\|\tilde{\bm{m}}\|^2 \ge 1$.
Thus, we obtain the equivalent form of constraint \eqref{consMSE} given in Eq.~\eqref{consquad}.

\end{appendices}

\bibliographystyle{IEEEtran}
\bibliography{BibIRS}

\begin{thebibliography}{10}
\providecommand{\url}[1]{#1}
\csname url@samestyle\endcsname
\providecommand{\newblock}{\relax}
\providecommand{\bibinfo}[2]{#2}
\providecommand{\BIBentrySTDinterwordspacing}{\spaceskip=0pt\relax}
\providecommand{\BIBentryALTinterwordstretchfactor}{4}
\providecommand{\BIBentryALTinterwordspacing}{\spaceskip=\fontdimen2\font plus
\BIBentryALTinterwordstretchfactor\fontdimen3\font minus
  \fontdimen4\font\relax}
\providecommand{\BIBforeignlanguage}[2]{{%
\expandafter\ifx\csname l@#1\endcsname\relax
\typeout{** WARNING: IEEEtran.bst: No hyphenation pattern has been}%
\typeout{** loaded for the language `#1'. Using the pattern for}%
\typeout{** the default language instead.}%
\else
\language=\csname l@#1\endcsname
\fi
#2}}
\providecommand{\BIBdecl}{\relax}
\BIBdecl

\bibitem{silver2018general}
D.~Silver, T.~Hubert, J.~Schrittwieser, I.~Antonoglou, M.~Lai, A.~Guez,
  M.~Lanctot, L.~Sifre, D.~Kumaran, T.~Graepel \emph{et~al.}, ``A general
  reinforcement learning algorithm that masters chess, shogi, and {Go} through
  self-play,'' \emph{Science}, vol. 362, no. 6419, pp. 1140--1144, 2018.

\bibitem{gatt2018survey}
A.~Gatt and E.~Krahmer, ``Survey of the state of the art in natural language
  generation: Core tasks, applications and evaluation,'' \emph{J. Artif.
  Intell. Res.}, vol.~61, pp. 65--170, 2018.

\bibitem{krizhevsky2012imagenet}
A.~Krizhevsky, I.~Sutskever, and G.~E. Hinton, ``Imagenet classification with
  deep convolutional neural networks,'' in \emph{Proc. Neural Inf. Process.
  Syst. (NeurIPS)}, 2012, pp. 1097--1105.

\bibitem{mao2017mobileedgecomputing}
Y.~{Mao}, C.~{You}, J.~{Zhang}, K.~{Huang}, and K.~B. {Letaief}, ``A survey on
  mobile edge computing: The communication perspective,'' \emph{IEEE Commun.
  Surveys Tuts.}, vol.~19, no.~4, pp. 2322--2358, 2017.

\bibitem{EdgeAI}
X.~{Wang}, Y.~{Han}, C.~{Wang}, Q.~{Zhao}, X.~{Chen}, and M.~{Chen}, ``In-edge
  {AI}: Intelligentizing mobile edge computing, caching and communication by
  federated learning,'' \emph{IEEE Netw.}, vol.~33, no.~5, pp. 156--165, Sep.
  2019.

\bibitem{lataief2019roadmap6g}
K.~B. {Letaief}, W.~{Chen}, Y.~{Shi}, J.~{Zhang}, and Y.~A. {Zhang}, ``The
  roadmap to {6G}: {AI} empowered wireless networks,'' \emph{IEEE Commun.
  Mag.}, vol.~57, no.~8, pp. 84--90, 2019.

\bibitem{shi2020edge}
Y.~{Shi}, K.~{Yang}, T.~{Jiang}, J.~{Zhang}, and K.~B. {Letaief},
  ``Communication-efficient edge {AI}: Algorithms and systems,'' \emph{IEEE
  Commun. Surveys Tuts.}, 2020.

\bibitem{FedAvg}
B.~McMahan, E.~Moore, D.~Ramage, S.~Hampson, and B.~A. y~Arcas,
  ``Communication-efficient learning of deep networks from decentralized
  data,'' in \emph{Proc. Int. Conf. Artificial Intell. Stat. (AISTATS)},
  vol.~54, 2017, pp. 1273--1282.

\bibitem{FLConcept}
Q.~Yang, Y.~Liu, T.~Chen, and Y.~Tong, ``Federated machine learning: Concept
  and applications,'' \emph{ACM Trans. Intell. Syst. Technol.}, vol.~10, no.~2,
  p.~12, 2019.

\bibitem{elgabli2020harnessing}
\BIBentryALTinterwordspacing
A.~Elgabli, J.~Park, C.~B. Issaid, and M.~Bennis, ``Harnessing wireless
  channels for scalable and privacy-preserving federated learning,'' 2020.
  [Online]. Available: \url{http://arxiv.org/abs/2007.01790}
\BIBentrySTDinterwordspacing

\bibitem{chen2020joint}
M.~{Chen}, Z.~{Yang}, W.~{Saad}, C.~{Yin}, H.~V. {Poor}, and S.~{Cui}, ``A
  joint learning and communications framework for federated learning over
  wireless networks,'' \emph{IEEE Trans. Wireless Commun.}, 2020.

\bibitem{ren2020scheduling}
J.~{Ren}, Y.~{He}, D.~{Wen}, G.~{Yu}, K.~{Huang}, and D.~{Guo}, ``Scheduling
  for cellular federated edge learning with importance and channel awareness,''
  \emph{IEEE Trans. Wireless Commun.}, 2020.

\bibitem{yuan2020scheduling}
H.~H. {Yang}, Z.~{Liu}, T.~Q.~S. {Quek}, and H.~V. {Poor}, ``Scheduling
  policies for federated learning in wireless networks,'' \emph{IEEE Trans.
  Commun.}, vol.~68, no.~1, pp. 317--333, 2020.

\bibitem{nazer2007computation}
B.~Nazer and M.~Gastpar, ``Computation over multiple-access channels,''
  \emph{IEEE Trans. Inf. Theory}, vol.~53, no.~10, pp. 3498--3516, 2007.

\bibitem{chen2018uniform}
L.~Chen, X.~Qin, and G.~Wei, ``A uniform-forcing transceiver design for
  over-the-air function computation,'' \emph{IEEE Wireless Commun. Lett.},
  vol.~7, no.~6, pp. 942--945, 2018.

\bibitem{zhu2019mimo}
G.~Zhu and K.~Huang, ``{MIMO} over-the-air computation for high-mobility
  multimodal sensing,'' \emph{IEEE Internet Things J.}, vol.~6, no.~4, pp.
  6089--6103, 2019.

\bibitem{yang2020federated}
K.~{Yang}, T.~{Jiang}, Y.~{Shi}, and Z.~{Ding}, ``Federated learning via
  over-the-air computation,'' \emph{IEEE Trans. Wireless Commun.}, vol.~19,
  no.~3, pp. 2022--2035, 2020.

\bibitem{zhu2020broadband}
G.~{Zhu}, Y.~{Wang}, and K.~{Huang}, ``Broadband analog aggregation for
  low-latency federated edge learning,'' \emph{IEEE Trans. Wireless Commun.},
  vol.~19, no.~1, pp. 491--506, 2020.

\bibitem{amiri2020machine}
M.~{Mohammadi Amiri} and D.~{Gündüz}, ``Machine learning at the wireless
  edge: Distributed stochastic gradient descent over-the-air,'' \emph{IEEE
  Trans. Signal Process.}, vol.~68, pp. 2155--2169, 2020.

\bibitem{sery2020analog}
T.~{Sery} and K.~{Cohen}, ``On analog gradient descent learning over multiple
  access fading channels,'' \emph{IEEE Trans. Signal Process.}, vol.~68, pp.
  2897--2911, 2020.

\bibitem{zhang2020gradient}
N.~{Zhang} and M.~{Tao}, ``Gradient statistics aware power control for
  over-the-air federated learning in fading channels,'' in \emph{Proc. IEEE
  Int. Conf. Comm. (ICC) Workshops}, 2020.

\bibitem{gao2020optimized}
X.~{Cao}, G.~{Zhu}, J.~{Xu}, and K.~{Huang}, ``Optimized power control for
  over-the-air computation in fading channels,'' \emph{IEEE Trans. Wireless
  Commun.}, 2020.

\bibitem{zhu2020over}
\BIBentryALTinterwordspacing
G.~Zhu, J.~Xu, and K.~Huang, ``Over-the-air computing for {6G}-turning air into
  a computer,'' 2020. [Online]. Available:
  \url{http://arxiv.org/abs/2009.02181}
\BIBentrySTDinterwordspacing

\bibitem{tang2020reliable}
\BIBentryALTinterwordspacing
S.~{Tang}, H.~{Yin}, and S.~{Obana}, ``Reliable over-the-air computation by
  amplify-and-forward based relay,'' 2020. [Online]. Available:
  \url{http://arxiv.org/abs/2010.12146}
\BIBentrySTDinterwordspacing

\bibitem{wu2020towards}
Q.~{Wu} and R.~{Zhang}, ``Towards smart and reconfigurable environment:
  Intelligent reflecting surface aided wireless network,'' \emph{IEEE Commun.
  Mag.}, vol.~58, no.~1, pp. 106--112, 2020.

\bibitem{wu2019intelligent}
------, ``Intelligent reflecting surface enhanced wireless network via joint
  active and passive beamforming,'' \emph{IEEE Trans. Wireless Commun.},
  vol.~18, no.~11, pp. 5394--5409, 2019.

\bibitem{ReconfigurableIRS}
C.~{Huang}, A.~{Zappone}, G.~C. {Alexandropoulos}, M.~{Debbah}, and C.~{Yuen},
  ``Reconfigurable intelligent surfaces for energy efficiency in wireless
  communication,'' \emph{IEEE Trans. Wireless Commun.}, vol.~18, no.~8, pp.
  4157--4170, Aug. 2019.

\bibitem{chen2019intelligent_secure}
J.~{Chen}, Y.~{Liang}, Y.~{Pei}, and H.~{Guo}, ``Intelligent reflecting
  surface: A programmable wireless environment for physical layer security,''
  \emph{IEEE Access}, vol.~7, pp. 82\,599--82\,612, 2019.

\bibitem{AirIRS}
T.~Jiang and Y.~Shi, ``Over-the-air computation via intelligent reflecting
  surfaces,'' in \emph{Proc. IEEE Global Commun. Conf. (Globecom)}, Dec. 2019.

\bibitem{wang2020wirelesspowered}
Z.~{Wang}, Y.~{Shi}, Y.~{Zhou}, H.~{Zhou}, and N.~{Zhang}, ``Wireless-powered
  over-the-air computation in intelligent reflecting surface aided {IoT}
  networks,'' \emph{IEEE Internet Things J.}, 2020.

\bibitem{yang2020flris}
K.~{Yang}, Y.~{Shi}, Y.~{Zhou}, Z.~{Yang}, L.~{Fu}, and W.~{Chen}, ``Federated
  machine learning for intelligent {IoT} via reconfigurable intelligent
  surface,'' \emph{IEEE Network}, vol.~34, no.~5, pp. 16--22, 2020.

\bibitem{lecun1998mnist}
\BIBentryALTinterwordspacing
Y.~LeCun, ``The {MNIST} database of handwritten digits,'' 1998. [Online].
  Available: \url{http://yann.lecun.com/exdb/mnist}
\BIBentrySTDinterwordspacing

\bibitem{liu2020privacy}
\BIBentryALTinterwordspacing
D.~Liu and O.~Simeone, ``Privacy for free: Wireless federated learning via
  uncoded transmission with adaptive power control,'' 2020. [Online].
  Available: \url{http://arxiv.org/abs/2006.05459}
\BIBentrySTDinterwordspacing

\bibitem{mishra2019channel}
D.~{Mishra} and H.~{Johansson}, ``Channel estimation and low-complexity
  beamforming design for passive intelligent surface assisted miso wireless
  energy transfer,'' in \emph{Proc. IEEE ICASSP}, 2019, pp. 4659--4663.

\bibitem{zheng2020intelligent}
B.~{Zheng} and R.~{Zhang}, ``Intelligent reflecting surface-enhanced ofdm:
  Channel estimation and reflection optimization,'' \emph{IEEE Wireless Commun.
  Lett.}, vol.~9, no.~4, pp. 518--522, 2020.

\bibitem{wang2020channel}
Z.~{Wang}, L.~{Liu}, and S.~{Cui}, ``Channel estimation for intelligent
  reflecting surface assisted multiuser communications: Framework, algorithms,
  and analysis,'' \emph{IEEE Trans. Wireless Commun.}, vol.~19, no.~10, pp.
  6607--6620, 2020.

\bibitem{SmoothedLp}
Y.~{Shi}, J.~{Cheng}, J.~{Zhang}, B.~{Bai}, W.~{Chen}, and K.~B. {Letaief},
  ``Smoothed $l_p$-minimization for green {Cloud-RAN} with user admission
  control,'' \emph{IEEE J. Sel. Areas Commun.}, vol.~34, no.~4, pp. 1022--1036,
  April. 2016.

\bibitem{ConOp}
S.~Boyd and L.~Vandenberghe, \emph{Convex optimization}.\hskip 1em plus 0.5em
  minus 0.4em\relax Cambridge Univ. Press, 2004.

\bibitem{MatrixLift}
Z.~{Luo}, W.~{Ma}, A.~M. {So}, Y.~{Ye}, and S.~{Zhang}, ``Semidefinite
  relaxation of quadratic optimization problems,'' \emph{IEEE Signal Process.
  Mag.}, vol.~27, no.~3, pp. 20--34, May. 2010.

\bibitem{DCProg}
P.~D. Tao and L.~T.~H. An, ``Convex analysis approach to {DC} programming:
  Theory, algorithm and applications,'' \emph{Acta Math. Vietnamica.}, vol.~22,
  no.~1, pp. 289--355, 1997.

\bibitem{cvx}
M.~Grant and S.~Boyd, ``{CVX}: Matlab software for disciplined convex
  programming, version 2.1,'' \url{http://cvxr.com/cvx}, Mar. 2014.

\bibitem{wu2020discrete}
Q.~{Wu} and R.~{Zhang}, ``Beamforming optimization for wireless network aided
  by intelligent reflecting surface with discrete phase shifts,'' \emph{IEEE
  Trans. Commun.}, vol.~68, no.~3, pp. 1838--1851, 2020.

\end{thebibliography}

\end{document}